\documentclass[11pt]{article}
\usepackage[a4paper,margin=1in]{geometry}
\usepackage[T1]{fontenc}
\usepackage{lmodern,microtype}
\pdfmapfile{+lm.map}
\pdfmapfile{+cm.map}
\pdfmapfile{+symbols.map}
\usepackage{amsmath,amssymb,amsthm,mathtools}
\usepackage{booktabs,enumitem,needspace}

\usepackage{xcolor,tikz}
\usepackage{graphicx,flafter,float}
\usepackage{algorithm,algorithmic}
\usetikzlibrary{arrows.meta,positioning,calc}
\usepackage[hidelinks]{hyperref}
\hypersetup{pdftitle={Polynomial Kernels for Interval Completion},pdfauthor={Zimo Sheng, Tian Bai, Mingyu Xiao}}

\newtheorem{theorem}{Theorem}[section]
\newtheorem{lemma}[theorem]{Lemma}

\theoremstyle{definition}

\theoremstyle{remark}

\newcommand{\problemstatement}[4]{%
  \par\medskip
  \begingroup
  \setlength{\fboxsep}{8pt}%
  \noindent\fbox{%
    \begin{minipage}{\dimexpr\linewidth-2\fboxsep-2\fboxrule\relax}
      \textbf{#1}\par\smallskip
      \textbf{Input.} #2\par\smallskip
      \textbf{Parameter.} #3\par\smallskip
      \textbf{Question.} #4
    \end{minipage}%
  }
  \par
  \endgroup
  \medskip
}

\title{Polynomial Kernels for Interval Completion}
\author{
Zimo Sheng \qquad Tian Bai\thanks{Corresponding author.} \qquad Mingyu Xiao\\[0.5em]
{\small
\href{mailto:shengzimo2016@gmail.com}{\texttt{shengzimo2016@gmail.com}}
\quad
\href{mailto:tian.bai@uib.no}{\texttt{tian.bai@uib.no}}
\quad
\href{mailto:myxiao@gmail.com}{\texttt{myxiao@gmail.com}}}
}
\date{}

\newcommand{\ic}{\operatorname{ic}}
\newcommand{\oct}{\operatorname{oct}}
\newcommand{\forwardcolor}{\mathord{\prec}}
\newcommand{\reversecolor}{\mathord{\succ}}
\let\originaltriangleleft\triangleleft
\renewcommand{\triangleleft}{\mathrel{\originaltriangleleft}}

\newcommand{\nonedges}{\overline E}

\renewcommand{\textsc}[1]{\ifmmode\text{\normalfont\scshape #1}\else{\normalfont\scshape #1}\fi}
\begin{document}
\maketitle
\begin{abstract}
  An interval graph is the intersection graph of a family of intervals on the real line.
\textsc{Interval Completion} asks whether a given graph can be transformed into an interval graph by adding at most $k$ edges.
Although the problem is fixed-parameter tractable when parameterized by $k$, whether it admits a polynomial kernel has long been an open question.
We resolve this question by giving the first polynomial kernel for \textsc{Interval Completion}.
Our main contribution is a parameter-preserving polynomial-time reduction from \textsc{Interval Completion} to \textsc{Odd Cycle Transversal} (OCT).
Combining this reduction with the known randomized and deterministic polynomial kernels for OCT and a polynomial-time reduction back to \textsc{Interval Completion}, we obtain a randomized kernel with $\widetilde O(k^{18})$ vertices and $\widetilde O(k^{36})$ edges, and a deterministic kernel with $O(k^{36})$ vertices and $O(k^{72})$ edges.
Here, $\widetilde O$ suppresses polylogarithmic factors in $k$.
\end{abstract}

\section{Introduction}
\label{sec:ic-introduction}
\label{sec:introduction}

An interval representation of a graph assigns an interval on the real line to each vertex so that two vertices are adjacent if and only if their intervals intersect.
Graphs admitting such a representation are called interval graphs.
They arise naturally in molecular biology, where interval intersections model overlap relationships between genomic regions~\cite{FulkersonGross1965}.
Interval graphs can be recognized in $O(n+m)$ time, where $n$ and $m$ denote the numbers of vertices and edges, respectively~\cite{BoothLueker1976}.

Graph modification problems ask whether a given graph can be transformed into a member of a prescribed graph class using a bounded number of allowed operations.
These problems have been widely studied for many target graph classes~\cite{NatanzonShamirSharan2001,CrespelleEtAl2023}.
For interval graphs, natural variants allow vertex deletions, edge deletions, or edge additions.
We study the edge-addition variant, \textsc{Interval Completion}, which asks whether a given graph can be made into an interval graph by adding at most $k$ edges~\cite{RaviAgrawalKlein1991}.
The problem is also related to matrix profile minimization and graph searching~\cite{KaplanShamirTarjan1999,FominGolovach2000}.
Formally, we consider the following parameterized problem. 

\problemstatement
  {\textsc{Interval Completion}}
  {A finite undirected simple graph $G$ and a nonnegative integer $k$.}
  {The number $k$ of allowed edge additions.}
  {Is there a set $F$ of at most $k$ nonedges of $G$ such that $G+F$ is an interval graph?}

\textsc{Interval Completion} is NP-complete~\cite{KashiwabaraFujisawa1979}. 
We study the problem parameterized by the number $k$ of allowed edge additions.
For this parameterization, Kaplan, Shamir, and Tarjan~\cite{KaplanShamirTarjan1999} asked whether \textsc{Interval Completion} is fixed-parameter tractable.
Villanger et al.~\cite{VillangerEtAl2009} settled this question with an algorithm running in $O(k^{2k}n^{3}m)$ time.
Cao~\cite{Cao2016} subsequently improved the running time to the single-exponential bound $6^{k} n^{O(1)}$.
Bliznets et al.~\cite{BliznetsEtAl2018} further obtained a subexponential algorithm running in $2^{O(\sqrt{k}\log k)}n^{O(1)}$ time.

\begingroup
Whether \textsc{Interval Completion} admits a polynomial kernel parameterized by $k$ has remained a longstanding open problem.
The question was posed by Villanger et al.~\cite{VillangerEtAl2009} and raised again by Bliznets et al.~\cite{BliznetsEtAl2018}.
Crespelle et al.~\cite{CrespelleEtAl2023} describe it as a ``notoriously hard'' problem.
Jansen~\cite{Jansen2023KernelsAndBeyond,Jansen2026QuoVadis} also included it among the most prominent open problems in kernelization in his \emph{Kernels and Beyond} session at FPT Fest 2023.

Polynomial kernels were already known for several closely related completion problems.
Proper interval graphs form a subclass of interval graphs: they admit an interval representation in which no interval strictly contains another.
Bessy and Perez~\cite{BessyPerez2013} gave an $O(k^{3})$-vertex kernel for \textsc{Proper Interval Completion}.
For chordal graphs, a superclass of interval graphs, Natanzon, Shamir, and Sharan~\cite{NatanzonShamirSharan2000} obtained an $O(k^{2})$-vertex kernel for \textsc{Chordal Completion}.
Thus polynomial kernels were known for completion into both a subclass and a superclass of interval graphs, while the intermediate case remained open.
For restricted input graphs, the results of Bessy and Perez~\cite{BessyPerez2013} also yield an $O(k^{2})$-vertex kernel for \textsc{Interval Completion} on co-bipartite graphs, whose vertices can be partitioned into two cliques.

The unresolved status of \textsc{Interval Completion} also stood out in the broader landscape of graph completion problems.
As noted by Crespelle et al.~\cite{CrespelleEtAl2023}, among the hereditary graph completion problems discussed in their survey that were known to admit subexponential parameterized algorithms, \textsc{Interval Completion} was the only one whose polynomial-kernel status remained open.
\par\endgroup

\subsection{Our Results}
We resolve the longstanding open question by giving the first randomized and deterministic polynomial kernels for \textsc{Interval Completion} on general graphs.
The key ingredient is a deterministic polynomial-time reduction that preserves the parameter exactly.
The target problem is \textsc{Odd Cycle Transversal} (OCT), defined as follows.

\problemstatement
{\textsc{Odd Cycle Transversal} (OCT)}
{A finite undirected simple graph $\hat{G}$ and a nonnegative integer $k$.}
{The number $k$ of allowed vertex deletions.}
{Is there a set $D\subseteq V(\hat{G})$ with $|D|\leq k$ such that $\hat{G}-D$ is bipartite?}

We will prove the following theorem. 
\begin{theorem}
  \label{thm:ic-oct-reduction}
  \label{thm:reduction}
There is a deterministic polynomial-time algorithm that maps each \textsc{Interval Completion} instance $(G,k)$ to an equivalent OCT instance $(\hat{G},k)$, where $\hat{G}$ has $O(n^{4})$ vertices and edges and $n=|V(G)|$.
\end{theorem}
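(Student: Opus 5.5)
The plan is to work in the complement. Write $H=\overline{G}$. Adding a set $F$ of nonedges to $G$ is the same as deleting $F$ from $H$. We use the classical characterization that $G+F$ is an interval graph iff it is chordal and its complement $H-F$ is a comparability graph. Equivalently, by a Gallai-type forcing argument, iff $H-F$ admits a transitive orientation compatible with a linear "interval order" on $V(G)$. The OCT instance will encode orientations as $2$-colourings, so that each bipartite side records one orientation choice. Each nonedge $uv$ of $G$ (edge of $H$) gets one \emph{choice vertex} $x_{uv}$. Its colour $\forwardcolor$ or $\reversecolor$ says whether $u$ is entirely to the left or entirely to the right of $v$ in the interval model. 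Deleting $x_{uv}$ in OCT means adding $uv$ to $G$, i.e.\ the intervals of $u$ and $v$ are allowed to overlap. This matches the parameter exactly, and we keep $k$ unchanged.

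First, we encode the local constraints between pairs of pairs as gadgets. For vertices $a,b,c$ with $ab,ac\in E(H)$, the orientations of $ab$ and $ac$ are forced to agree or disagree, depending on whether $bc$ survives in $H-F$. Since $bc$ may itself be deleted, the constraint is conditional on the vertex $x_{bc}$ being present. To express this, for each quadruple or triple we use a small constant-size gadget built from $x_{ab}$, $x_{ac}$, $x_{bc}$ and fresh auxiliary vertices. There are $O(n^4)$ index tuples $(a,b,c,d)$ in total (for instance, pairs of $H$-edges), which gives the stated size bound. The auxiliary vertices must be undeletable. I would enforce this by an equivalent design in which deleting an auxiliary vertex is never better than deleting the adjacent choice vertex, or, if necessary, by making $k+1$ twins with the same neighbourhood. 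In the latter case I would take care that twins are only needed for $O(n^2)$ vertices, or cap $k$ below $\binom n2$ (otherwise output a trivial yes-instance), so the $O(n^4)$ bound survives. Each gadget is designed so that, when $x_{bc}$ is present, every proper $2$-colouring of the gadget forces the prescribed relation between the colours of $x_{ab}$ and $x_{ac}$. When $x_{bc}$ is deleted, the gadget must be $2$-colourable for every colour pattern.

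Second, I will prove the two directions. (i) From an interval model of $G+F$ we read off a left-to-right relation on the surviving $H$-edges. We colour the choice vertices accordingly, delete $\{x_{uv}:uv\in F\}$, and extend the colouring to the gadgets; transitivity and chordality guarantee that each gadget is consistent. (ii) Conversely, take a deletion set $D$ of size at most $k$. We first normalize it so that $D$ consists of choice vertices only. Then we set $F=\{uv:x_{uv}\in D\}$ and interpret the $2$-colouring as an orientation of $H-F$. We then show that this orientation is transitive. Its consistency on every "$\Gamma$-forcing" configuration, combined with Gallai's theorem, makes $H-F$ a comparability graph. We also need the extra gadget constraints to exclude the configurations in which $G+F$ fails to be chordal: an asteroidal triple, or an induced $C_4$ in $G+F$. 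These would appear as a violated four-vertex constraint, which is where the quadruple-indexed gadgets are essential.

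The main obstacle is step (ii), specifically the global consistency. Gallai's theorem says that pairwise forcing constraints imply transitivity \emph{only} when the forcing relation is taken in the final graph $H-F$, and this graph depends on which choice vertices were deleted. So the conditional gadgets must capture exactly the forcing relation of $H-F$, without spurious constraints through deleted vertices and without missing constraints. On top of that, odd cycles in the OCT graph must correspond precisely to non-transitive implication classes or to non-chordality. I would attack this first by carefully proving a lemma. The lemma should state that for every $F$, the graph $\hat G-\{x_{uv}:uv\in F\}$ is bipartite iff $G+F$ is an interval graph. I would prove it via the interval-ordering characterization (a vertex order in which $u<v<w$ and $uw\in E$ imply $vw\in E$), since that characterization is checkable on triples and quadruples.
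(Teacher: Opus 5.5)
Your step (ii) cannot work, and the obstruction is structural, not a matter of gadget design. Your plan needs an exact, solution-level correspondence. You want $\hat G-\{x_{uv}:uv\in F\}$ to be bipartite iff $G+F$ is interval, for every $F$, and you want a proper $2$-colouring of $\hat G-D$ to be readable as a transitive orientation of $H-F_D$. No OCT instance of this shape can do either.

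\emph{Monotonicity.} Bipartiteness survives vertex deletion, but interval-ness of $G+F$ is not monotone in $F$. For the path $G=v_1v_2v_3v_4$, your direction (i) with $F=\varnothing$ makes $\hat G$ bipartite. So $D=\{x_{14}\}$ is an OCT of size $1$ consisting of choice vertices, yet $G+v_1v_4\cong C_4$.

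\emph{Affineness.} Identify colours with $\mathbb F_2$. The proper $2$-colourings of $\hat G-D$ are the solutions of $\chi(p)+\chi(q)=1$ over its edges, an affine subspace, and their restrictions to the choice vertices again form an affine subspace. For $G=\overline{K_3}$, direction (i) must admit all six acyclic patterns on $x_{12},x_{13},x_{23}$. An affine subspace of $\mathbb F_2^3$ containing six points is all of $\mathbb F_2^3$, so the two cyclic patterns are properly colourable too. Hence transitivity cannot be enforced by any gadgets.

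\emph{Reversed conditional gadgets.} $\Gamma$-forcing between $ab$ and $ac$ arises when $bc$ is \emph{absent} from $H-F$, i.e.\ when $x_{bc}$ is deleted, so deletion would have to create a constraint. When $bc$ survives, $a,b,c$ are pairwise disjoint and all four patterns on $(x_{ab},x_{ac})$ occur, so there is no pairwise relation to force.

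The missing idea is to give up solution-level correspondence and prove only $\ic(G)=\oct(\hat G)$. The paper encodes just the unconditional pairwise constraints coming from edges of $G$: a length-two path when $u\sim x$ and $v\sim y$, and an edge when $u\sim y$ and $v\sim x$. This yields $\oct(\hat G)\le\ic(G)$ directly (Lemma~\ref{lem:forward}). For the converse it does not use $F_D$ at all. From a minimum OCT $D$ of variable vertices and its colouring, it builds an interval representation by a quicksort-like recursion on the $2n$ endpoints, pivoting on left endpoints (Table~\ref{tab:endpoint-placement}). It shows that original edges are never separated. It also shows that in each call the two pivot choices of every cyclic triangle change the count of right-before-left comparisons by $+1$ and $-1$, so $\mathbb E[|F|]=|D|$. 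Together with $|F|\ge\ic(G)\ge\oct(\hat G)$, this forces $|F|=|D|$. The resulting $F$ may fill nonedges outside $D$ and reorient surviving ones, as Figure~\ref{fig:oct-fill-counterexample} shows it sometimes must. Your proposal has no counterpart of this repair-and-averaging step, and without one the reverse inequality is unproved.
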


Kratsch and Wahlstr\"om~\cite{KratschWahlstrom2014} gave a randomized polynomial kernel for OCT.
{Their subsequent work~\cite{KratschWahlstrom2020} yields a randomized kernel that can be converted into an ordinary OCT instance with $\widetilde O(k^{6})$ vertices and edges.}
Very recently, Koana and Kumabe~\cite{KoanaKumabe2026} gave a deterministic polynomial kernel with $O(k^{12})$ vertices and edges.
{Combining Theorem~\ref{thm:ic-oct-reduction} with these kernels and a polynomial-time reduction back to \textsc{Interval Completion}, we obtain the following kernels.}

\begin{theorem}
  \label{thm:ic-kernel}
  \label{thm:kernel}
  \textsc{Interval Completion} admits a deterministic polynomial kernel with $O(k^{36})$ vertices, $O(k^{72})$ edges, and output parameter bounded by $O(k^{48})$.
\end{theorem}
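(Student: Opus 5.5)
The plan is to compose three polynomial-time maps and track the sizes. Given an instance $(G,k)$ of \textsc{Interval Completion} with $n=|V(G)|$, we first dispose of trivial cases. If $k$ is large compared with $n$, say $k\ge\binom n2$, the instance is a yes-instance, and we output a constant-size yes-instance. Otherwise we apply Theorem~\ref{thm:ic-oct-reduction} to get an equivalent OCT instance $(\hat G,k)$ with the same parameter $k$. Next we run the deterministic kernel of Koana and Kumabe~\cite{KoanaKumabe2026} on $(\hat G,k)$. This gives an equivalent OCT instance $(\hat G',k')$ with $k'\le k$ (if needed, pad the parameter; in any case $k'=O(k)$ or polynomial in $k$) and $|V(\hat G')|+|E(\hat G')|=O(k^{12})$.

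The remaining task is a polynomial-time many-one reduction from OCT back to \textsc{Interval Completion}. Since \textsc{Interval Completion} is NP-complete~\cite{KashiwabaraFujisawa1979} and OCT is in NP, such a reduction exists abstractly. To get the stated bounds, though, we must use a concrete one with explicit size control. The target numbers tell us what it must look like. It must map an OCT instance with $N$ vertices and edges and parameter $k'$ to an \textsc{Interval Completion} instance with $O(N^3)$ vertices, $O(N^6)$ edges and parameter $O(N^4)$. With $N=O(k^{12})$ this gives $O(k^{36})$ vertices, $O(k^{72})$ edges and parameter $O(k^{48})$.

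The natural route is through the classical chain. First, OCT reduces to a vertex-cover-type or 3SAT-type problem. Then comes the standard NP-hardness reduction for \textsc{Interval Completion}, originally from \textsc{Optimal Linear Arrangement}/profile minimization (Kashiwabara–Fujisawa, or the route via \textsc{Minimum Fill-In} and co-bipartite gadgets). Concretely, I would pick a standard reduction from OCT to a problem P, and the known reduction from P to \textsc{Interval Completion}. For each I would verify the polynomial blow-up: $N^3$ vertices, a dense graph in which the edge count is at most quadratic in the vertex count, and a new budget given by a number of forced fill edges plus a scaled copy of the old budget, totalling $O(N^4)$. The edge bound $O(k^{72})=O((k^{36})^2)$ just says the output graph may be dense, so only the vertex count and the parameter need care.

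Finally we check that the whole composition is a kernel. Every step runs in time polynomial in its input size: the first step in time polynomial in $n$, and the later steps in time polynomial in $O(k^{12})$. Equivalence is preserved at each step. The output size and parameter are bounded purely by functions of $k$, as computed above. Determinism holds because all three components are deterministic.

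The main obstacle is the back-reduction step, that is, exhibiting or citing a reduction OCT $\to$ \textsc{Interval Completion} whose vertex count is cubic and parameter quartic in the input size. If the cleanest known chain overshoots, I would try reducing OCT first to \textsc{Vertex Cover}, which is linear in size. I would then use a direct gadget reduction from \textsc{Vertex Cover} or \textsc{Clique} to \textsc{Interval Completion} on co-bipartite graphs, where completion is tightly tied to a linear-arrangement cost, and recount the exponents.
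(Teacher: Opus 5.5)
Your three-stage architecture matches the paper's. You apply Theorem~\ref{thm:ic-oct-reduction}, then the Koana--Kumabe kernel to get an OCT instance with $N=n'+m'=O(k^{12})$. Then a back-reduction with $O(N^3)$ vertices, $O(N^6)$ edges and parameter $O(N^4)$ gives the exponents $36$, $72$, $48$.

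\textbf{The gap.} You never supply that back-reduction. You derive its required blow-up from the target exponents and then hope some classical chain meets it, such as OCT $\to$ 3SAT/\textsc{Vertex Cover} $\to$ linear arrangement/profile $\to$ \textsc{Interval Completion}. You neither construct such a reduction nor cite one with these bounds. Each step of such a chain has its own polynomial blow-up, and you have not checked that the composition stays within $N^3$ vertices and an $N^4$ budget. The abstract NP-completeness argument gives only \emph{some} polynomial kernel, not the stated one. The theorem is precisely a claim about the exponents, so the explicit back-reduction and its size analysis are the substance of the proof, and they are missing.

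\textbf{What the paper does} (Lemma~\ref{lem:direct-oct-ic}). It first passes from OCT to \textsc{Independent Set} on $H$, which is two copies of $G'$ joined by a perfect matching, so that $\alpha(H)=n'-\oct(G')$. It then builds a co-bipartite instance with cliques $P=V(H)\cup R$ and $Q$. After normalizing an interval representation, one has $\ic(G'')=\min_{\pi_P}\sum_{y\in Q}\bigl(\max_{x\in N_P(y)}\pi_P(x)-|N_P(y)|\bigr)$. The clique $Q$ contains:
\begin{enumerate}
  \item a vertex $y^{uv}$ with neighborhood $\{u,v\}$ for each nonedge $uv$ of $H$;
  \item $d_H(v)-1$ vertices adjacent only to $v$, for each $v\in V(H)$;
  \item $r=m_Hn_H$ vertices $z^v_j$ with neighborhood $\{v\}\cup R$, for each $v\in V(H)$.
\end{enumerate}
An exchange argument shows $R$ may be taken consecutive in $\pi_P$. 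The cost then becomes $C_0+\sum_{uv\in E(H)}\min\{\pi(u),\pi(v)\}-r\bigl(|I|-|E(H[I])|\bigr)$, where $I$ is the set of vertices of $H$ placed after $R$. The amplification $r=m_Hn_H$ makes the $r\cdot\alpha(H)$ term dominate the order-dependent sum, which yields a clean threshold $k''$. The counts $|Q|=O(rn_H)=O(N^3)$ and $k''=O(rn_H^2)=O(N^4)$ are exactly where the exponents come from.

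Your closing remark about going through \textsc{Vertex Cover} and co-bipartite graphs with a linear-arrangement cost points in this direction. To actually prove the stated bounds, you still need the redundant-vertex amplification gadget and the cost identity above.
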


\begin{theorem}
  \label{thm:ic-kernel-1}
  \label{thm:kernel-1}
  \textsc{Interval Completion} admits a randomized polynomial kernel with $\widetilde O(k^{18})$ vertices, $\widetilde O(k^{36})$ edges, and output parameter bounded by $\widetilde O(k^{24})$.
  The failure probability is at most $2^{-\Omega(n)}$, where $n=|V(G)|$.
  Errors are limited to false negatives.
\end{theorem}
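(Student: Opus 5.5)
The plan is a three-stage composition: Theorem~\ref{thm:ic-oct-reduction}, then the randomized OCT kernel, then a reduction back to \textsc{Interval Completion}. Given $(G,k)$, we first handle the trivial regime. If $k \ge \binom{n}{2}$, the instance is a yes-instance, since completing $G$ to a clique gives an interval graph, and we output a constant-size yes-instance. Otherwise we apply Theorem~\ref{thm:ic-oct-reduction} to obtain an equivalent OCT instance $(\hat G,k)$ with $O(n^4)$ vertices.

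Next we apply the randomized kernel of Kratsch and Wahlstr\"om~\cite{KratschWahlstrom2020} in its form that outputs an ordinary OCT instance $(\hat G',k')$. This instance has $\widetilde O(k^6)$ vertices and edges and satisfies $k'\le \widetilde O(k^6)$; if $k'$ were larger it could be capped by the vertex count. That kernel uses matroid representations and has one-sided error: a yes-instance may be mapped to a no-instance, but not conversely. Its failure probability is $2^{-\Omega(N)}$ in the input size $N$. Since $N=|V(\hat G)|\ge n$, or after padding the randomness so that the bound is in terms of $n$, this gives the claimed $2^{-\Omega(n)}$ and false negatives only. All subsequent steps are deterministic and equivalence-preserving, so the error type carries through.

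The last step is a polynomial-time reduction from OCT back to \textsc{Interval Completion}. Since OCT is in NP and \textsc{Interval Completion} is NP-complete~\cite{KashiwabaraFujisawa1979}, such a reduction exists abstractly, but its size blow-up must be controlled. The target is a reduction mapping an OCT instance with $N'$ vertices and edges and parameter $k'$ to an \textsc{Interval Completion} instance with $O(N'^{3})$ vertices, $O(N'^{6})$ edges and parameter $O(N'^{4})$. Substituting $N'=\widetilde O(k^6)$ then yields $\widetilde O(k^{18})$ vertices, $\widetilde O(k^{36})$ edges and parameter $\widetilde O(k^{24})$. These exponents match the deterministic statement Theorem~\ref{thm:kernel} with $N'=O(k^{12})$, which supports this target.

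The concrete route is to go through a known chain: OCT to \textsc{Vertex Cover}, or to satisfiability with cubic size, and then a Yannakakis-style or Kashiwabara--Fujisawa gadget construction for completion problems, where each variable or edge becomes a small co-bipartite or path gadget. The main obstacle is this back-reduction. Its polynomial exponents must be pinned down exactly to get $3$, $6$ and $4$, and the simplest approach may be to take the cubic-vertex construction that the paper presumably develops for Theorem~\ref{thm:kernel} and reuse it verbatim. Everything else is bookkeeping of the composition.
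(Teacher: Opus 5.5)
\textbf{The back-reduction is missing.} Your skeleton matches the paper's: Theorem~\ref{thm:ic-oct-reduction}, then the Kratsch--Wahlstr\"om kernel, then a deterministic, equivalence-preserving reduction back, so the one-sided error carries through. The problem is that you never construct the step that carries all the quantitative content. The bounds $\widetilde O(k^{18})$, $\widetilde O(k^{36})$, $\widetilde O(k^{24})$ come from plugging $N=\widetilde O(k^{6})$ into a specific OCT-to-\textsc{Interval Completion} reduction with $O(N^{3})$ vertices, $O(N^{6})$ edges and parameter $O(N^{4})$. You read these exponents off the deterministic theorem and then defer to the construction ``the paper presumably develops'', which you have not given. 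The NP-completeness argument and the unspecified chain through \textsc{Vertex Cover}/SAT gadgets only give some unquantified polynomial.

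What is needed is Lemma~\ref{lem:direct-oct-ic}. Let $H$ be two copies of $G'$ joined by a perfect matching, so $\alpha(H)=n'-\oct(G')$. Build a co-bipartite $G''$ as follows. One clique $P$ is $V(H)$ plus $r=m_Hn_H$ redundant vertices $R$. The other clique $Q$ has a vertex attached to $\{u,v\}$ for each nonedge $uv$ of $H$, and for each $v$ it has $d_H(v)-1$ vertices attached to $v$ and $r$ vertices attached to $v$ and all of $R$. Normalizing an interval representation reduces the cost to $\min_{\pi_P}\sum_{y\in Q}\bigl(\max_{x\in N_P(y)}\pi_P(x)-|N_P(y)|\bigr)$. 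One then shows that $R$ may be taken consecutive, after which the cost is $C_0+\sum_{uv\in E(H)}\min\{\pi(u),\pi(v)\}-r\bigl(|I|-|E(H[I])|\bigr)$, where $I$ is the set of vertices after $R$. Since $|I|-|E(H[I])|\le\alpha(H)$, the amplification by $r$ makes the threshold $k''=C_0+m_H(n_H-1)-r(n'-k')$ separate $\alpha(H)\ge n'-k'$ from $\alpha(H)<n'-k'$. The counts $|Q|=\binom{n_H}{2}-m_H+4m'+rn_H=O(N^{3})$ and $C_0=O(rn_H^{2})=O(N^{4})$ are exactly what produce the exponents $3,6,4$. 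Your hint toward a vertex-cover formulation and co-bipartite gadgets points in the right direction, but without this gadget and its cost analysis the stated bounds are not proved.

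\textbf{Smaller points.}
\begin{itemize}
\item Your claim $|V(\hat G)|\ge n$ is false in general. $\hat G$ has at most $\bar m+O(\bar m^{2})$ vertices, which is tiny when $G$ is nearly complete. So the kernel's failure bound need not be $2^{-\Omega(n)}$, and ``padding the randomness'' is not a black-box operation. The paper instead adds $n$ isolated vertices to $\hat G$, which preserves both the answer and $k$.
\item The Kratsch--Wahlstr\"om output is a graph on $O((k\log k\log\log k)^{3})$ vertices whose edges prescribe equal or different colors, not an ordinary OCT instance. Replacing each equality constraint by a path of length two yields an ordinary OCT instance of total size $\widetilde O(k^{6})$. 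This works because a deleted middle vertex can be swapped for an endpoint.
\item Handle $k=0$ directly by interval-graph recognition, since the bounds degenerate there.
\end{itemize}
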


\subsection{Technical Overview}

Our kernelization has two stages: a parameter-preserving reduction to OCT, developed in Section~\ref{sec:construction}, followed by kernelization and a reduction back to \textsc{Interval Completion}, developed in Section~\ref{sec:kernel}.

We first consider the reduction from an interval completion instance on graph $G$ to an OCT instance on graph  $\hat{G}$. 
An interval graph has no induced $C_4$. Thus, an induced $C_4$ illustrates a local obstruction to an interval completion: at least one of its two diagonals must be filled.
However, hitting all such pairs of nonedges alone does not guarantee an interval graph.
Every nonedge that remains unfilled requires its endpoint intervals to be disjoint, and the associated left-to-right choices must be compatible with a single interval representation.
For example, suppose interval $a$ lies to the left of interval $b$, and interval $b$ lies to the left of interval $c$.
Now require interval $c$ to lie to the left of interval $a$.
These three requirements cannot hold in an interval representation.

We use two colors to encode which of the two disjoint endpoint intervals lies to the left of the other.
With respect to a fixed ordering of the vertices of $G$, the color assigned to a nonedge records which endpoint interval lies to the left of the other.

We construct $\hat G$ as follows. First, we introduce one vertex of $\hat G$ for each nonedge of $G$.
To express the constraints imposed by the adjacencies in $G$, for certain pairs of vertices in $\hat G$, we add either an edge, requiring them to receive different colors, or a path of length two, requiring them to receive the same color. Each such path uses a new additional vertex.
The resulting same-color and different-color constraints are simultaneously satisfiable exactly when $\hat G$ is bipartite.
The construction has $O(n^{4})$ vertices and edges. The details are given in Section~\ref{sec3.1}.

The correctness proof establishes both directions of the correspondence.
Given an interval completion, its interval representation determines a two-coloring after we delete the vertices corresponding to the filled nonedges.
For the converse, we first show that a minimum OCT can be chosen to contain only vertices corresponding to nonedges: a deleted internal path vertex can be replaced by an endpoint without increasing the solution size.
Lemma~\ref{lem:reverse} then uses a two-coloring of the remaining graph to recursively construct an ordering of all interval endpoints.
An averaging argument bounds the number of added edges by the size of the minimum OCT.
Together, these transformations establish the parameter-preserving equivalence in Theorem~\ref{thm:reduction}.

In Section~\ref{sec:kernel}, we apply either the randomized or the deterministic OCT kernel to $(\hat G,k)$.
We then transform the resulting ordinary OCT instance into an equivalent \textsc{Interval Completion} instance.
Bounding the size and output parameter of this back reduction gives Theorems~\ref{thm:ic-kernel} and~\ref{thm:ic-kernel-1}.
Figure~\ref{fig:kernel-overview} summarizes the kernelization.

\begin{figure}[H]
  \centering
  \begingroup
  \begin{tikzpicture}[
      font=\small,text=black,
      flow/.style={-{Stealth[length=2.3mm]},draw=black,line width=.8pt},
      stage/.style={draw=black!65,fill=black!2,rounded corners=2pt,
        align=center,minimum width=3.1cm,minimum height=1.75cm,
        inner sep=4pt}]
    \node[stage] (input) at (0,0)
      {\textbf{Interval}\\\textbf{completion}\\[4pt]$(G,k)$};
    \node[stage] (oct) at (4.1,0)
      {\textbf{OCT}\\[4pt]$(\hat{G},k)$};
    \node[stage] (smalloct) at (8.2,0)
      {\textbf{OCT kernel}\\[4pt]$(G^{\prime},k^{\prime})$};
    \node[stage] (output) at (12.3,0)
      {\textbf{Interval}\\\textbf{completion kernel}\\[4pt]
       $(G^{\prime\prime},k^{\prime\prime})$};
    \draw[flow] (input.east)--(oct.west);
    \draw[flow] (oct.east)--(smalloct.west);
    \draw[flow] (smalloct.east)--(output.west);
    \node[font=\scriptsize] at (2.05,1.2)
      {Sec.~\ref{sec:construction}};
    \node[font=\scriptsize] at (6.15,1.2) {Known kernel};
    \node[font=\scriptsize] at (10.25,1.2)
      {Sec.~\ref{sec:kernel}};
  \end{tikzpicture}
  \caption{Overview of the kernelization.}
  \label{fig:kernel-overview}
  \endgroup
\end{figure}

\section{Preliminaries}
\label{sec:preliminaries}

We work on graphs that are finite, undirected, simple, and unweighted.
Unless otherwise specified, $n=|V(G)|$ and $m=|E(G)|$ denote the numbers of vertices and edges of the input graph $G$, respectively.
Fix an arbitrary labeling $V(G)=\{v_{1},v_{2},\ldots,v_{n}\}$ with the order $v_{1}\triangleleft v_{2}\triangleleft\cdots\triangleleft v_{n}$.
The relation $u\triangleleft v$ means that $u$ precedes $v$ in this fixed order.
A nonedge is a pair of distinct vertices that are not adjacent.
The set of all nonedges is denoted by $\nonedges(G)=\binom{V(G)}{2}\setminus E(G)$ and $\bar{m}=|\nonedges(G)|=\binom{n}{2}-m$.

An edge or nonedge with endpoints $u,v$ is denoted by $uv$.
For an edge or nonedge $e=uv$ of $G$, we assume $u\triangleleft v$.
We set $e_{ij}=v_{i}v_{j}$ for $i<j$.
For $u,v\in V(G)$, the relation $u\sim v$ means that $u=v$ or $uv\in E(G)$.
For a graph $G$ and $U\subseteq V(G)$, let $G[U]$ denote the subgraph induced by $U$, and let $G-U = G[V(G)\setminus U]$ be the graph obtained by deleting $U$.
Let $d_{G}(v)$ denote the degree of $v$ in $G$, and let $\overline{G}$ denote the complement of $G$.
For $F\subseteq\nonedges(G)$, the graph $G+F$ is obtained by adding the edges in $F$.

A graph $H$ is an interval graph if its vertices can be assigned closed intervals such that distinct vertices are adjacent exactly when their intervals intersect.
Such an assignment is an interval representation of $H$.
The interval assigned to $v\in V(H)$ is denoted by $I_{v}=[L_{v},R_{v}]$.
An interval supergraph $H$ of the graph $G$ is an interval graph on the same vertex set with $E(G) \subseteq E(H)$.
In an interval representation, $u\prec v$ and $I_{u}\prec I_{v}$ both mean $R_{u}<L_{v}$, so $I_{u}$ lies entirely to the left of $I_{v}$.
The relations $u\succ v$ and $I_{u}\succ I_{v}$ both mean $v\prec u$.

A set $F\subseteq\binom{V(G)}{2}\setminus E(G)$ is an \emph{interval completion} of a graph $G$ if $G+F$ is an interval graph; its minimum size is $\ic(G)$.
A set $D\subseteq V(\hat{G})$ is an \emph{odd cycle transversal} of a graph $\hat{G}$ if $\hat{G}-D$ is bipartite; its minimum size is $\oct(\hat{G})$.
A proper two-coloring of a graph $\hat{G}$ is a map $\chi:V(\hat{G})\to\{\forwardcolor,\reversecolor\}$ such that $\chi(u)\neq\chi(v)$ for every edge $uv\in E(\hat{G})$.
A graph is bipartite if and only if it admits a proper two-coloring.

Two instances are equivalent if they have the same answer.
A kernelization transforms an input instance $(G,k)$ in polynomial time into an equivalent instance $(G',k')$ of the same problem such that the size of the graph $G'$ and the parameter $k'$ are bounded by a function of $k$.
{It is a \emph{polynomial kernel} when both $|V(G')|$ and $k'$ are bounded by polynomials in $k$.}

A randomized kernelization runs in randomized polynomial time and satisfies the same size and parameter bounds for every outcome of its random choices.
An error that maps a Yes-instance to a No-instance is called a false negative.
The randomized kernel used here always maps No-instances to No-instances.
For a Yes-instance on $n$ vertices, the probability of producing a No-instance is at most $2^{-\Omega(n)}$.

\section{Reduction to the OCT problem}
\label{sec:construction}
In this section, we give the construction of an OCT instance $(\hat{G},k)$ from an instance $(G,k)$ of \textsc{Interval Completion}.
We show that $\ic(G)=\oct(\hat{G})$, so the parameter $k$ is preserved.
Recall that we fix an arbitrary order of the vertices in $G$: $v_{1}\triangleleft v_{2}\triangleleft\cdots\triangleleft v_{n}$.
When we write an edge or nonedge $uv$, we always assume $u\triangleleft v$.

\subsection{The construction}\label{sec3.1}

We first define the construction of $\hat{G}$ as follows.
\begin{enumerate}[label=\arabic*.,leftmargin=*]
  \item For each nonedge $e$ of $G$, create a \emph{variable vertex} $x_{e}$.
  \item For each pair of distinct nonedges $e_{1}=uv$ and $e_{2}=xy$ with $u\triangleleft v$ and $x\triangleleft y$, apply both of the following rules.
  \begin{enumerate}[label=(\roman*),leftmargin=*]
    \item If $u\sim x$ and $v\sim y$, create a new auxiliary vertex $w_{e_{1},e_{2}}$ adjacent only to $x_{e_{1}}$ and $x_{e_{2}}$.

    \item If $u\sim y$ and $v\sim x$, add the edge $x_{e_{1}}x_{e_{2}}$.
  \end{enumerate}
\end{enumerate}
The two conditions may both hold, in which case we add both the path and the edge.
We may also use the abbreviations $x_{ij}=x_{e_{ij}}$ and $w_{ij,pq}=w_{e_{ij},e_{pq}}$.
Figure~\ref{fig:connections} illustrates the two connection rules.

Consider an interval representation of a completion $G+F$.
Filling a nonedge $e=uv$ makes $I_{u}$ and $I_{v}$ intersect, so neither interval lies entirely to the left of the other.
We represent this by deleting $x_{e}$.
For each unfilled nonedge $e=uv$, we set $\chi(x_{e})=\forwardcolor$ if $u\prec v$ and $\chi(x_{e})=\reversecolor$ if $u\succ v$.
Thus the color of a variable vertex records whether its separation follows or reverses the fixed vertex order.
On auxiliary vertices, the same symbols serve only as color labels.
Figure~\ref{fig:connections} shows how the edges of $G$ constrain these colors.

\begin{figure}[H]
  \centering
  \begin{tikzpicture}[font=\small,>=Latex,
      v/.style={circle,draw,fill=white,inner sep=1.8pt},
      forward/.style={v,minimum size=5mm},
      reverse/.style={forward,fill=black,text=white},
      interval/.style={|-|,line width=1pt},
      every label/.style={text=black}]
    \node at (.75,2.15) {$G$};
    \node at (5.85,2.15) {Intervals};
    \node at (10.8,2.15) {$\hat{G}$};
    \foreach \shift/\kind in {0/0,-3/1}{
    \begin{scope}[yshift=\shift cm]
      \node[v,label=left:$u$] (gu) at (0,1.4) {};
      \node[v,label=left:$v$] (gv) at (0,0) {};
      \node[v,label=right:$x$] (gx) at (1.5,1.4) {};
      \node[v,label=right:$y$] (gy) at (1.5,0) {};
      \draw[dashed,gray] (gu)--node[left]{$e_{1}$}(gv) (gx)--node[right]{$e_{2}$}(gy);
      \ifnum\kind=0 \draw(gu)--(gx) (gv)--(gy);\else \draw(gu)--(gy) (gv)--(gx);\fi
      \draw[->] (2.2,.7)--(2.8,.7);
      \fill[black!4] (4.4,-.32) rectangle (5.2,1.72);
      \fill[black!4] (6.7,-.32) rectangle (7.5,1.72);
      \foreach \coord in {4.4,5.2,6.7,7.5}
        \draw[dashed,gray!35] (\coord,-.32)--(\coord,1.72);
      \foreach \name/\level in {u/1.6,v/1,x/.4,y/-.2}
        \node[anchor=east] at (3.7,\level) {$I_{\name}$};
      \draw[interval] (4,1.6)--(5.6,1.6);
      \draw[interval] (6.3,1)--(7.9,1);
      \ifnum\kind=0
        \draw[interval] (4.4,.4)--(5.2,.4);
        \draw[interval] (6.7,-.2)--(7.5,-.2);
      \else
        \draw[interval] (6.7,.4)--(7.5,.4);
        \draw[interval] (4.4,-.2)--(5.2,-.2);
      \fi
      \draw[->] (8.4,.7)--(9,.7);
      \node[forward,label=below:$x_{e_{1}}$] (xe) at (9.6,.7) {$\forwardcolor$};
      \ifnum\kind=0
        \node[reverse,label=above:$w_{e_{1},e_{2}}$] (w) at (10.85,.7) {$\reversecolor$};
        \node[forward,label=below:$x_{e_{2}}$] (xf) at (12.1,.7) {$\forwardcolor$};
        \draw(xe)--(w)--(xf);
      \else
        \node[reverse,label=below:$x_{e_{2}}$] (xf) at (12.1,.7) {$\reversecolor$};
        \draw(xe)--(xf);
      \fi
    \end{scope}}
  \end{tikzpicture}
  \caption{The two connection rules for distinct endpoints, with $I_{u}\prec I_{v}$.
  Solid edges in $G$ indicate the required adjacencies, and dashed edges mark $e_{1},e_{2}$.
  The interval placements give equal colors in the top row and different colors in the bottom row.}
  \label{fig:connections}
\end{figure}

In the top row, assume $u\prec v$.
The edges $ux$ and $vy$ make $I_{x}$ intersect $I_{u}$ and $I_{y}$ intersect $I_{v}$.
Since $I_{x}$ and $I_{y}$ are disjoint, $I_{x}\prec I_{y}$, giving $x\prec y$.
The two variable vertices must therefore have the same color, as enforced by the length-two path in a proper two-coloring.
In the bottom row, the edges $uy$ and $vx$ force $x\succ y$ when $u\prec v$.
The two variable vertices must then have different colors, as enforced by the direct edge.

\begingroup
Figure~\ref{fig:c4-construction} illustrates the construction for the graph $G$, which consists of a single cycle $(v_{1},v_{2},v_{3},v_{4},v_{1})$.
There are two nonedges $e_{13}$ and $e_{24}$ in $G$.
Initially, $\hat{G}$ contains only two vertices $x_{13}$ and $x_{24}$. 
Both connection rules apply to $e_{13}$ and $e_{24}$, so $\hat G$ is the triangle on $x_{13}$, $x_{24}$, and $w_{13,24}$.

\begin{figure}[H]
  \centering
  \begingroup
  \begin{tikzpicture}[font=\small,text=black,
      v/.style={circle,draw=black,fill=white,inner sep=1.8pt},
      every label/.style={text=black}]
    \begin{scope}
      \node at (.75,2.25) {$G$};
      \node[v,label=left:$v_1$] (g1) at (0,1.4) {};
      \node[v,label=right:$v_2$] (g2) at (1.5,1.4) {};
      \node[v,label=right:$v_3$] (g3) at (1.5,0) {};
      \node[v,label=left:$v_4$] (g4) at (0,0) {};
      \draw[black] (g1)--(g2)--(g3)--(g4)--(g1);
      \draw[dashed,gray] (g1)--node[pos=.3,sloped,above,text=black]{$e_{13}$}(g3);
      \draw[dashed,gray] (g2)--node[pos=.7,sloped,above,text=black]{$e_{24}$}(g4);
    \end{scope}
    \begin{scope}[xshift=4.1cm]
      \node at (.75,2.25) {$\overline G$};
      \node[v,label=left:$v_1$] (b1) at (0,1.4) {};
      \node[v,label=right:$v_3$] (b3) at (1.5,0) {};
      \node[v,label=right:$v_2$] (b2) at (1.5,1.4) {};
      \node[v,label=left:$v_4$] (b4) at (0,0) {};
      \draw[black] (b1)--(b3) (b2)--(b4);
    \end{scope}
    \begin{scope}[xshift=8.2cm]
      \node at (.9,2.25) {$\hat G$};
      \node[v,label=below:$x_{13}$] (xe) at (0,0) {};
      \node[v,label=below:$x_{24}$] (xf) at (1.8,0) {};
      \node[v,label=above:$w_{13,24}$] (w) at (.9,1.4) {};
      \draw[black] (xe)--(w)--(xf)--(xe);
    \end{scope}
  \end{tikzpicture}
  \caption{The construction for $G=C_4$. Dashed lines mark the nonedges of $G$.}
  \label{fig:c4-construction}
  \endgroup
\end{figure}

\par\endgroup

There are $\bar{m}$ variable vertices, and each pair of nonedges creates at most one auxiliary vertex and three edges.
Hence $|V(\hat{G})|\leq \bar{m}+\bar{m}(\bar{m}-1)/2$ and $|E(\hat{G})|\leq3\bar{m}(\bar{m}-1)/2$.
After an adjacency matrix of $G$ has been built, each pair of nonedges requires only a constant number of adjacency queries.
The construction therefore takes $O(n^{4})$ time.

\subsection{Correctness of the reduction}

We now prove the correctness of the reduction.
Let $D$ be a set of variable vertices of the constructed graph $\hat G$.
Assign each variable vertex outside $D$ a color $\chi(x_{e})\in\{\forwardcolor,\reversecolor\}$.
For each corresponding nonedge $e=uv$, choose $u\prec v$ if $\chi(x_{e})=\forwardcolor$ and $u\succ v$ otherwise.
In statements about selected separations, the endpoints are named in their selected directions, independently of the fixed vertex order.
The following basic property of the construction is immediate from the definitions.

\begin{lemma}
  \label{lem:color-directions}
  The auxiliary vertices can be colored to obtain a proper two-coloring of $\hat G-D$
  if and only if there are no selected separations $u\prec v$ and $x\prec y$ with
  \begin{equation*}
    u\sim y \quad\text{and}\quad x\sim v.
  \end{equation*}
\end{lemma}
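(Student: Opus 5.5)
The plan is to translate the two connection rules into constraints on the selected directions and to check that the only coloring obstruction left, once variable vertices are colored, lies in the rule~(ii) edges. Since $D$ contains only variable vertices, every auxiliary vertex $w_{e_1,e_2}$ survives and is adjacent exactly to $x_{e_1}$ and $x_{e_2}$. Only those neighbours not in $D$ matter. If both neighbours are in $\hat G-D$ and have the same color, we color $w_{e_1,e_2}$ with the opposite color. If they have different colors, no color for $w$ works. If at most one neighbour survives, a suitable color always exists. Hence the auxiliary vertices can be colored properly if and only if two conditions hold. First, for every rule~(i) pair with both $e_1,e_2$ unfilled, $\chi(x_{e_1})=\chi(x_{e_2})$. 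Second, for every rule~(ii) pair with both unfilled, $\chi(x_{e_1})\neq\chi(x_{e_2})$. There are no edges among auxiliary vertices, so nothing else needs to be checked.

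Next we reformulate each constraint in terms of selected separations. Take $e_1=uv$ and $e_2=xy$ with $u\triangleleft v$ and $x\triangleleft y$, both unfilled.

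Consider a rule~(i) pair, so $u\sim x$ and $v\sim y$, and suppose the colors differ. Say $\chi(x_{e_1})=\forwardcolor$ and $\chi(x_{e_2})=\reversecolor$. The selected separations are then $u\prec v$ and $y\prec x$. Naming them $a\prec b$ and $c\prec d$ with $(a,b)=(u,v)$ and $(c,d)=(y,x)$, we get $a\sim d$ (from $u\sim x$) and $c\sim b$ (from $y\sim v$). This is exactly the forbidden pattern. The symmetric case is handled the same way. Conversely, equal colors give separations $u\prec v$, $x\prec y$ or their reversals, and then rule~(i) does not produce the pattern.

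Now consider a rule~(ii) pair, so $u\sim y$ and $v\sim x$, and suppose the colors are equal. The separations $u\prec v$ and $x\prec y$ directly exhibit the pattern. Equal colors of $\reversecolor$ give $v\prec u$ and $y\prec x$, and there $v\sim x$ and $y\sim u$ again form the pattern.

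For the converse direction, suppose a forbidden pattern exists: separations $a\prec b$ and $c\prec d$ with $a\sim d$ and $c\sim b$. We must show some constraint is violated. First, the two separations come from distinct nonedges. If they were the same nonedge, it would be either $\{a,b\}=\{c,d\}$ with $a=c$, which forces $a\sim b$ and contradicts $ab$ being a nonedge, or the reversed pair $c=b$, $d=a$, which contradicts the antisymmetry of a single separation. Writing each nonedge in $\triangleleft$-order, there are four cases depending on whether $a\triangleleft b$ and whether $c\triangleleft d$.
\begin{itemize}[leftmargin=*]
  \item If both follow the order, the colors are equal and rule~(ii) applies.
  \item If both reverse the order, the colors are equal and rule~(ii) applies after renaming.
  \item If the two orientations are mixed, the colors differ and the adjacencies match rule~(i), which requires equal colors.
\end{itemize}
So the equivalence holds in both directions.

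The only real obstacle is the bookkeeping of names between the fixed order $\triangleleft$ and the selected direction, together with excluding the degenerate case where both separations come from a single nonedge. Everything else is a direct case check.
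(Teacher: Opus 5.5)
Your proof is correct and follows essentially the same route as the paper. Both reduce extendability to two conditions: retained direct edges join differently colored variable vertices, and auxiliary vertices with two retained neighbours see equal colors. Both then translate violations of these conditions into forbidden pairs, and back, by a case check on orientations. You are slightly more careful than the paper in explicitly excluding the case where both separations come from the same nonedge, and the sentence beginning ``Conversely, equal colors give separations\ldots'' is not needed for the logic and can be dropped.
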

\begin{proof}
  We first consider the forward implication.
  Suppose the auxiliary vertices have been colored so that $\hat G-D$ is properly two-colored.
  Consider two distinct retained nonedges $e_{1}=uv$ and $e_{2}=xy$, with $u\triangleleft v$ and $x\triangleleft y$.
  If $u\sim x$ and $v\sim y$, the auxiliary path forces their variable vertices to have the same color.
  If $u\sim y$ and $v\sim x$, the direct edge forces them to have different colors.
  Suppose their selected separations form a forbidden pair.
  If their colors differ, then $u\sim x$ and $v\sim y$, contradicting the auxiliary path.
  If their colors agree, then $u\sim y$ and $v\sim x$, contradicting the direct edge.
  Thus the condition is necessary.

  We now consider the reverse implication.
  Suppose no two selected separations satisfy the stated relations.
  Every retained direct edge then has differently colored ends, and every auxiliary vertex with two retained neighbors has equally colored neighbors.
  Give each such auxiliary vertex the opposite color.
  An auxiliary vertex with one retained neighbor receives the opposite color to that neighbor, and one with no retained neighbors may receive either color.
  There are no edges between auxiliary vertices, so this gives a proper two-coloring of $\hat{G}-D$.
\end{proof}

\begin{lemma}
  \label{lem:forward}
  $\oct(\hat{G})\leq\ic(G)$.
\end{lemma}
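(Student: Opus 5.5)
Take a minimum interval completion $F$ with $|F|=\ic(G)$ and fix an interval representation $\{I_v\}$ of $G+F$. We may assume all $2n$ endpoints are distinct: perturbing endpoints slightly does not change which pairs of closed intervals intersect, as long as intersecting intervals keep a nontrivial overlap. Let $D=\{x_e : e\in F\}$, so that $|D|=|F|$. It suffices to show that $\hat G-D$ is bipartite.

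Every nonedge $e=uv$ of $G$ with $e\notin F$ is still a nonedge of $G+F$. Hence $I_u$ and $I_v$ are disjoint, and exactly one of $u\prec v$ and $u\succ v$ holds. Set $\chi(x_e)=\forwardcolor$ in the first case and $\chi(x_e)=\reversecolor$ in the second. These are exactly the selected separations of Lemma~\ref{lem:color-directions}. By that lemma it is enough to show that no two selected separations $u\prec v$ and $x\prec y$ satisfy $u\sim y$ and $x\sim v$. Here endpoints are named in their selected directions, and the two separations come from distinct retained nonedges.

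Suppose such a pair exists. Every edge of $G$ is an edge of $G+F$, so $u\sim y$ gives $I_u\cap I_y\neq\emptyset$ (trivially if $u=y$), and likewise $I_x\cap I_v\neq\emptyset$. From $u\prec v$ we get $R_u<L_v$, and from $x\prec y$ we get $R_x<L_y$.
\begin{itemize}
\item Since $I_u$ meets $I_y$, we have $L_y\le R_u$, so $R_x<L_y\le R_u<L_v$.
\item Then $I_x$ lies entirely to the left of $I_v$, which contradicts $I_x\cap I_v\neq\emptyset$.
\end{itemize}
This chain covers the degenerate cases $u=y$ or $x=v$ too, because $u\sim y$ with $u=y$ still gives $L_y\le R_u$. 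So no forbidden pair exists. Lemma~\ref{lem:color-directions} then extends $\chi$ to a proper two-coloring of $\hat G-D$, and therefore $\oct(\hat G)\le|D|=\ic(G)$.

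The argument is short, and the only point needing care is the bookkeeping around Lemma~\ref{lem:color-directions}. In particular, one must check that the colors defined from the representation agree with the lemma's convention of naming endpoints by selected direction rather than by the fixed order $\triangleleft$. Since the lemma is stated exactly in terms of selected separations, I expect this to go through directly.
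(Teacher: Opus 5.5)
Your proof is correct and is essentially the paper's argument: the same set $D=\{x_e:e\in F\}$, the same coloring read off from the interval representation, the same appeal to Lemma~\ref{lem:color-directions}, and the same endpoint chain, which the paper writes as $R_u<L_v\leq R_x<L_y\leq R_u$. Your perturbation to distinct endpoints is unnecessary, and it would need care for intervals that only touch at a point; two disjoint closed intervals already satisfy exactly one of $R_u<L_v$ and $R_v<L_u$.
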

\begin{proof}
  Let $F$ be a minimum interval completion of $G$ and set $D=\{x_{e}:e\in F\}$.
  Take an interval representation $I_{v}=[L_{v},R_{v}]$ of $G+F$.
  Orient each nonedge outside $F$ according to its interval separation and assign the corresponding variable colors as in Lemma~\ref{lem:color-directions}.
  Suppose two selected separations $u\prec v$ and $x\prec y$ satisfy both $u\sim y$ and $x\sim v$.
  Then $I_{u}\cap I_{y}\neq\varnothing$ and $I_{x}\cap I_{v}\neq\varnothing$, giving
  \begin{equation*}
    R_{u}<L_{v}\leq R_{x}<L_{y}\leq R_{u},
  \end{equation*}
  a contradiction.
  By Lemma~\ref{lem:color-directions}, we can color the auxiliary vertices to obtain a proper two-coloring of $\hat{G}-D$.
  Thus $D$ is an OCT.
  Since $|D|=|F|=\ic(G)$, we obtain
  \begin{equation*}
    \oct(\hat{G})\leq\ic(G).\qedhere
  \end{equation*}
\end{proof}

The reverse inequality is more challenging.
One may notice that any OCT can be replaced by one containing only variable vertices without increasing its size.
Indeed, replace each auxiliary vertex $w_{e_{1},e_{2}}$ in the OCT with one of its variable neighbors.
Each restored auxiliary vertex then has at most one retained neighbor, so a proper two-coloring extends to it.
Hence a minimum OCT consisting only of variable vertices always exists.
{However}, its corresponding nonedges need not form an interval completion, as Figure~\ref{fig:oct-fill-counterexample} shows.
We instead construct an interval representation of a completion based on any minimum OCT, and use an averaging argument to show that one of the resulting completions has size at most that of the OCT.

\begin{figure}[H]
  \centering
  \begin{tikzpicture}[font=\small,
      v/.style={circle,draw,fill=white,inner sep=2pt},
      deleted/.style={v,draw=black,fill=black!10},
      added/.style={black,line width=1pt},
      auxiliary/.style={draw=black,line width=.7pt},
      deletedaux/.style={draw=black,dashed,line width=.7pt},
      smallw/.style={circle,draw=white,fill=black,
        line width=.4pt,inner sep=0pt,minimum size=3pt}]
    \begin{scope}
      \foreach \i/\angle in {1/90,2/30,3/-30,4/-90,5/-150,6/150}
        \node[v,label={\angle:$v_{\i}$}] (g\i) at (\angle:1.3) {};
      \draw (g1)--(g3)--(g5)--(g1)
        (g2)--(g4)--(g6)--(g2)
        (g1)--(g4) (g2)--(g5) (g3)--(g6);
      \node at (0,-2.1) {$G=\overline{C_{6}}$};
    \end{scope}
    \begin{scope}[xshift=5.6cm]
      \foreach \i/\e/\angle in {1/12/90,2/23/30,3/34/-30,4/45/-90,5/56/-150,6/16/150}
        \node[v,label={\angle:$x_{\e}$}] (x\i) at (\angle:2) {};
      \foreach \a/\b in {1/2,2/3,3/4,4/5}
        \draw[dashed,black] (x\a)--(x\b);
      \foreach \a/\b in {1/4,2/5,3/6}
        \draw[dashed,black] (x\a) to[bend right=10] (x\b);
      \draw (x2)--(x6)--(x4);
      \foreach \i/\j/\bend/\where in {
        1/3/0/.6,1/4/10/.65,1/5/0/.5,1/6/0/.5,
        2/4/0/.22,2/5/10/.3,3/5/0/.18,3/6/10/.35,5/6/0/.5}{
        \path (x\i) to[bend left=\bend]
          node[pos=\where,coordinate] (w\i\j) {} (x\j);
        \ifodd\i
          \draw[deletedaux] (x\i) to[bend left={\bend/2}] (w\i\j);
        \else
          \draw[auxiliary] (x\i) to[bend left={\bend/2}] (w\i\j);
        \fi
        \ifodd\j
          \draw[deletedaux] (w\i\j) to[bend left={\bend/2}] (x\j);
        \else
          \draw[auxiliary] (w\i\j) to[bend left={\bend/2}] (x\j);
        \fi
      }
      \foreach \i/\j in {1/3,1/4,1/5,1/6,2/4,2/5,3/5,3/6,5/6}
        \node[smallw] at (w\i\j) {};
      \foreach \i in {1,3,5}{
        \node[deleted] at (x\i) {};
        \draw[black] (x\i) ++(-.1,-.1)--++(.2,.2);
        \draw[black] (x\i) ++(-.1,.1)--++(.2,-.2);
      }
      \node at (0,-2.8) {Full graph $\hat{G}$};
    \end{scope}
    \begin{scope}[xshift=11.25cm]
      \foreach \i/\angle in {1/90,2/30,3/-30,4/-90,5/-150,6/150}
        \node[v,label={\angle:$v_{\i}$}] (g\i) at (\angle:1.3) {};
      \draw[gray] (g1)--(g5)--(g3) (g2)--(g4)--(g6)
        (g1)--(g4) (g2)--(g5);
      \draw[line width=1pt] (g1)--(g3)--(g6)--(g2);
      \draw[added] (g1)--node[above right]{$e_{12}$}(g2)
        (g3)--node[below right]{$e_{34}$}(g4)
        (g5)--node[above left]{$e_{56}$}(g6);
      \foreach \i in {1,2,3,6}
        \node[v,fill=black!20] at (g\i) {};
      \node[align=center] at (0,-2.1) {$G+F$\\with an induced $C_{4}$};
    \end{scope}
  \end{tikzpicture}
  \caption{A minimum OCT whose corresponding nonedges do not form an interval completion.
  In the full graph $\hat{G}$ (middle), small unlabeled auxiliary vertices lie on curved paths, and dashed edges meet $D=\{x_{12},x_{34},x_{56}\}$.
  Three vertex-disjoint triangles give the lower bound of three, while colors $\forwardcolor,\forwardcolor,\reversecolor$ on $x_{23},x_{45},x_{16}$ extend to all auxiliary vertices, so $D$ is minimum.
  In $G+F$ (right), $F=\{e_{12},e_{34},e_{56}\}$ leaves the shaded induced four-cycle $v_{1}v_{2}v_{6}v_{3}v_{1}$ because $e_{16}$ and $e_{23}$ remain absent; yet $F^{*}=\{e_{12},e_{23},e_{34}\}$ is an interval completion of size three.}
  \label{fig:oct-fill-counterexample}
\end{figure}

Before proving the reverse inequality $\ic(G)\leq\oct(\hat{G})$, we describe a randomized procedure for constructing an interval representation based on a minimum OCT of $\hat{G}$.
This construction is used only in the analysis; the kernelization in Section~\ref{sec:kernel} uses the instance reductions described there.
Fix a minimum OCT $D$ consisting only of variable vertices and a proper two-coloring $\chi$ of $\hat{G}-D$.
Put $\hat{k}=|D|=\oct(\hat{G})$.
For every retained nonedge $e_{ij}$ with $i<j$, select its separation as follows:
  \begin{equation*}
    \left\{
    \begin{aligned}
      &v_{i}\prec v_{j}\quad&&\text{if }\chi(x_{ij})=\forwardcolor; \\
      &v_{i}\succ v_{j}\quad&&\text{if }\chi(x_{ij})=\reversecolor.
    \end{aligned}
    \right.
  \end{equation*}
  By Lemma~\ref{lem:color-directions}, no two selected separations $u\prec v$ and $x\prec y$ satisfy both $u\sim y$ and $x\sim v$.
  We will use these \emph{selected separations} to construct an interval representation of a completion of $G$.
  Note that these selected separations prescribe initial endpoint comparisons and need not hold in the final interval representation.

  For each $v\in V(G)$, introduce a left endpoint $L_{v}$ and a right endpoint $R_{v}$, so there are $2n$ endpoints in total.
  For pairs of left endpoints satisfying neither of the first two conditions in Table~\ref{tab:endpoint-placement}, sample the initial comparisons uniformly and independently once per unordered pair before recursion.
  These comparisons are never resampled.
  Algorithm~\ref{alg:endpoint-order} recursively orders these endpoints, starting with the block $B_{\mathrm{initial}}=\{L_{v},R_{v}:v\in V(G)\}$.
  \begin{algorithm}[H]
    \caption{Recursive endpoint ordering \textsc{Order}$(B)$.}
    \label{alg:endpoint-order}
    \begin{algorithmic}[1]
      \REQUIRE A block $B$ of endpoints.
      \ENSURE An order of $B$.
      \IF{$B$ contains no left endpoint}
        \RETURN an arbitrary order of $B$.
      \ENDIF
      \STATE Choose $v$ uniformly from $\{u\in V(G):L_{u}\in B\}$.
      \STATE Partition $B\setminus\{L_{v}\}$ into $B_{1}$ and $B_{2}$ using Table~\ref{tab:endpoint-placement}.
      \RETURN the concatenation $\textsc{Order}(B_{1}),L_{v},\textsc{Order}(B_{2})$.
    \end{algorithmic}
  \end{algorithm}
  Assign the coordinates $1,2,\ldots,2n$ to the endpoints $\{ L_{v}, R_{v}:v\in V(G)\}$ according to the returned order, and set $I_{v}=[L_{v},R_{v}]$ for each $v\in V(G)$.

  Table~\ref{tab:endpoint-placement} specifies how Algorithm~\ref{alg:endpoint-order} partitions $B\setminus\{L_{v}\}$ into $B_{1}$ and $B_{2}$ when $L_{v}$ is the pivot.
  \begin{table}[H]
    \centering
    \caption{Endpoint placement with pivot $L_{v}$.}
    \label{tab:endpoint-placement}
    \begin{tabular}{p{0.32\linewidth}p{0.5\linewidth}}
      \toprule
      Placement & Condition \\
      \midrule
      Put $L_{u}$ in $B_{1}$, before $L_{v}$ & There is a selected separation $x\prec v$ with $u\sim x$. \\
      Put $L_{u}$ in $B_{2}$, after $L_{v}$ & There is a selected separation $x\prec u$ with $v\sim x$. \\
      Put $L_{u}$ in $B_{1}$ or $B_{2}$ w.p.\ $1/2$ & Neither preceding condition holds. \\
      \midrule
      Put $R_{u}$ in $B_{1}$, before $L_{v}$ & $u\prec v$ is a selected separation. \\
      Put $R_{u}$ in $B_{2}$, after $L_{v}$ & $u\prec v$ is not a selected separation. \\
      \bottomrule
    \end{tabular}
  \end{table}
  The first two placement rules never compare an endpoint with itself, since a selected separation $x\prec u$ implies $u\not\sim x$.
  The first two conditions for $L_{u}$ cannot both hold, by Lemma~\ref{lem:color-directions}.
  All conditions are evaluated in the full graph $G$ using the fixed selected separations; in particular, $x$ ranges over $V(G)$ even if its endpoints lie outside $B$.
  These rules specify the initial comparison for every pair of distinct endpoints involving a left endpoint; no comparisons between two right endpoints are needed.

  We now prove that this construction establishes the reverse inequality.

\begin{lemma}
  \label{lem:reverse}
  $\ic(G)\leq\oct(\hat{G})$.
\end{lemma}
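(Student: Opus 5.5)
We analyze the randomized construction of Algorithm~\ref{alg:endpoint-order}. Two facts are needed. First, every run outputs a valid interval representation $I_v=[L_v,R_v]$ of some supergraph $G+F$ of $G$. Second, the expected number of nonedges $uv$ of $G$ whose intervals intersect, that is $\mathbb{E}|F|$, is at most $\hat k=|D|$. Some outcome then has $|F|\le\hat k$, which gives $\ic(G)\le\oct(\hat G)$.

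\textbf{Validity.} We first check that $L_v<R_v$ for every $v$. When $L_v$ is the pivot, $R_v$ goes to $B_2$, because $v\prec v$ is never a selected separation. If $L_u$ with $u\ne v$ is the pivot, $L_v$ and $R_v$ should land on the same side or in the order $L_v\in B_1$, $R_v\in B_2$; they must never be ordered the other way. This needs a case check against Table~\ref{tab:endpoint-placement}. Suppose $R_v\in B_1$, so $v\prec u$ is a selected separation. Then the separation $v\prec u$ together with $v\sim v$ triggers the first left-endpoint rule, so $L_v\in B_1$. This is consistent.

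Next we check that every edge $uv\in E(G)$ gets intersecting intervals. Suppose not, say $R_u<L_v$. Consider the recursive call in which $R_u$ and $L_v$ are first separated. The pivot is some $L_w$ with $R_u\in B_1$, so $u\prec w$ is selected, and $L_v\in B_2$. We may have $L_v$ equal to the pivot itself, in which case $w=v$ and $u\prec v$ is selected, contradicting $uv\in E(G)$. Otherwise, since $u\prec w$ is selected and $u\sim v$, the first rule forces $L_v\in B_1$, a contradiction. The remaining possibility, that $L_v$ landed in $B_2$ because of the second rule, would need a selected separation $x\prec v$ with $w\sim x$. Combined with $u\prec w$ and $u\sim v$, this is exactly the forbidden configuration of Lemma~\ref{lem:color-directions}.

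\textbf{Counting.} A retained nonedge $uv$ whose selected separation is, say, $u\prec v$ ends up filled only if the order is violated, that is, $R_u>L_v$. That happens only when the pivot separating $R_u$ from $L_v$ is $L_w$ with $u\not\prec w$ selected, yet $L_v$ is placed in $B_1$, or the symmetric situation. Deleted nonedges $e$ with $x_e\in D$ may be filled, and we count them freely, since there are only $\hat k$ of them. The goal is therefore to show that violations of retained separations are paid for by deleted ones. I would charge each violation to a nonedge in $D$ through the chain of separations that forced the violation. The randomness enters here: the uniform pivot choice and the random comparisons of ``free'' left endpoints should make the expected contribution of each deleted nonedge at most $1$, by a quicksort-style argument in which a pair is compared directly only when one of them is the first pivot chosen from the relevant range.

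\textbf{Main obstacle.} The main obstacle is the averaging step, namely making a charging scheme that yields $\mathbb{E}|F|\le\hat k$ exactly rather than up to a constant. My first attempt is to show that a retained separation $u\prec v$ can be violated only if some deleted nonedge $e$ is forced, and to pair outcomes by symmetry: flipping the free random comparisons shows each deleted nonedge is filled with probability at most $1/2$ per such event, which balances the cost. Failing that, I would use minimality of $D$ to argue that any violation yields an alternative smaller OCT, a contradiction, and hence that no retained separation is ever violated. In that case the run outputs a completion with $F\subseteq\{e:x_e\in D\}$ deterministically, and the bound $|F|\le\hat k$ follows immediately.
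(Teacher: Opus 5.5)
Your validity argument is correct, and it is more direct than the paper's cyclic-triangle invariant. At the first call separating $R_u$ from $L_v$, the pivot $L_w$ has $u\prec w$ selected. Then either $w=v$, which is impossible for an edge, or the first left-endpoint rule with $x=u$ puts $L_v$ in $B_1$.

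The gap is the counting step, which is where the whole difficulty of the lemma lies, and your proposal does not prove it. The charging idea and the ``flip the free comparisons'' idea are never turned into an argument. They also point at the wrong source of randomness. The paper's bound holds conditioned on \emph{any} fixed values of the sampled left--left comparisons, and the balancing comes entirely from the uniform pivot choice.

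Your fallback is false. Consider the paper's $\overline{C_6}$ example with $D=\{x_{12},x_{34},x_{56}\}$. There $\ic(G)=3=|D|$, so a completion $F\subseteq\{e:x_e\in D\}$ would have to equal $\{e_{12},e_{34},e_{56}\}$, which leaves an induced $C_4$. Hence every run of the algorithm on this instance must fill some retained nonedge. Minimality of $D$ cannot exclude this, since other completions of the same size, such as $F^{*}$, exist. The bound can only hold on average, with each violated retained separation offset by a deleted nonedge whose intervals end up disjoint.

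The missing idea is a potential function. Let $\sigma$ be the number of left--right endpoint pairs whose current comparison puts the right endpoint first.
\begin{itemize}
\item Initially $\sigma=\bar m-\hat k$. A retained nonedge with selected $u\prec v$ contributes only the pair $(R_u,L_v)$, while deleted nonedges and edges contribute nothing.
\item At the end $\sigma=\bar m-|F|$, because each pair of disjoint intervals contributes exactly one such pair.
\item Pivot $L_v$ reverses the comparison of two other block elements $L_a,R_b$ exactly when $\{L_v,L_a,R_b\}$ is a cyclic triangle of the initial comparisons. In such a triangle, one of the two left--right comparisons is right-first and the other is left-first. So pivoting on $L_v$ and pivoting on $L_a$ change $\sigma$ by opposite amounts $\pm1$.
\item Summing over the $\ell$ equally likely pivots of a block therefore gives zero. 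Comparisons inside every recursive block are still the initial ones, so induction over blocks gives $\mathbb{E}[\sigma_{\mathrm{final}}]=\sigma_{\mathrm{initial}}$, that is, $\mathbb{E}|F|=\hat k$.
\end{itemize}
This cancellation is what your quicksort-style intuition needs to become.
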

\begin{proof}
  Run Algorithm~\ref{alg:endpoint-order} on $G$ and the selected separations determined by the minimum OCT $D$ and coloring $\chi$ fixed above.
  The procedure terminates because each pivot is removed from both recursive blocks, and its returned order contains every endpoint exactly once.

  We first show that these intervals represent an interval supergraph $H$ of $G$.
  The last two rows of Table~\ref{tab:endpoint-placement} initially place $L_{u}$ before $R_{x}$ whenever $u\sim x$.
  Initially, no cyclic triangle has comparisons placing $L_{u}$ before $R_{x}$, $R_{x}$ before $L_{v}$, and $L_{v}$ before $L_{u}$ when $u\sim x$.
  Indeed, placing $R_{x}$ before $L_{v}$ means that $x\prec v$ is a selected separation, and $u\sim x$ then requires $L_{u}$ to precede $L_{v}$.
  A pivot fixes all comparisons across its blocks in the order front, pivot, back, while leaving comparisons within each block unchanged.
  In particular, comparisons within a recursive block still agree with the initial table, so applying the same table in subsequent calls uses precisely the current comparisons in that block.
  Any cyclic triangle after the split must therefore lie entirely within $B_{1}$ or $B_{2}$, where all comparisons are unchanged.
  Thus no pivot creates a new cyclic triangle of two left endpoints and one right endpoint.
  Reversing a comparison that places $L_{u}$ before $R_{x}$ would require $R_{x}$ in the front block and $L_{u}$ in the back block, giving exactly such a cyclic triangle before the split.
  The absence of such triangles with $u\sim x$ is preserved throughout the recursion, so every comparison placing $L_{u}$ before $R_{x}$ with $u\sim x$ is preserved.
  In particular, $L_{v}$ precedes $R_{v}$ for every $v$, so each interval is valid.
  For every original edge $uv$, both $L_{u}$ precedes $R_{v}$ and $L_{v}$ precedes $R_{u}$, so $I_{u}$ and $I_{v}$ intersect.
  Thus $H$ is an interval supergraph of $G$.
  Let $F=E(H)\setminus E(G)$ be its added edge set, which is an interval completion of $G$.

  We condition on the random comparisons between left endpoints and show that $\mathbb E[|F|]=\hat{k}$ under the uniform left-pivot choices.
  Let $\sigma$ count left--right pairs whose current comparison puts the right endpoint first.
  Initially each retained nonedge contributes exactly one such comparison, so
  \begin{equation*}
    \sigma_{\mathrm{initial}}=\bar{m}-\hat{k}.
  \end{equation*}

  For a call whose block contains $\ell>0$ left endpoints, sum the change in $\sigma$ over its $\ell$ possible left pivots.
  Only reversed left--right comparisons affect $\sigma$.
  For a fixed pivot, every reversed left--right comparison arises from exactly one cyclic triangle containing the pivot, another left endpoint, and one right endpoint.
  For each such cyclic triangle, choosing either left endpoint as pivot reverses one of its left--right comparisons.
  Of the two left--right comparisons in the triangle, one puts the right endpoint first and the other puts it last.
  The two pivot choices therefore change $\sigma$ by $+1$ and $-1$, so their contributions cancel.
  Triangles containing only left endpoints contribute zero, and comparisons between right endpoints do not affect $\sigma$.
  Thus the expected change in this call is zero.
  A block containing only right endpoints changes no left--right comparison and leaves $\sigma$ unchanged.
  Induction over the recursive blocks gives $\mathbb E[\sigma_{\mathrm{final}}]=\sigma_{\mathrm{initial}}$.
  In the resulting representation, each pair of disjoint intervals contributes exactly one comparison with the right endpoint first.
  Hence
  \begin{equation*}
    \sigma_{\mathrm{final}}=\bar{m}-|F|,
    \qquad\mathbb E[|F|]=\hat{k}.
  \end{equation*}

  Every output defines an interval completion, so Lemma~\ref{lem:forward} gives
  \begin{equation*}
    |F|\geq\ic(G)\geq\oct(\hat{G})=\hat{k}.
  \end{equation*}
  Since $\mathbb E[|F|]=\hat{k}$ and every choice of the initially sampled comparisons and left pivots has positive probability, every output satisfies $|F|=\hat{k}$.
  In particular, $\ic(G)\leq\oct(\hat{G})$, as required.
  Thus both the initially sampled comparisons and the left-pivot choices can be fixed arbitrarily.
  The resulting deterministic procedure produces a minimum interval completion of the same size as $D$.
  The construction takes polynomial time once $D$ and its coloring are given.
  Together with Lemma~\ref{lem:forward}, this gives
  \begin{equation*}
    \ic(G)=\oct(\hat{G}).
  \end{equation*}
  The construction in the proof of Lemma~\ref{lem:forward} then shows that $D^{*}=\{x_{e}:e\in F\}$ is a minimum OCT whose corresponding nonedges form this completion.
\end{proof}

For the original instance $(G,k)$ and the graph $\hat{G}$ constructed in this section, Lemmas~\ref{lem:forward} and~\ref{lem:reverse} give $\ic(G)=\oct(\hat{G})$.
Since $\hat{G}$ can be constructed in $O(n^{4})$ time, the map $(G,k)\mapsto(\hat{G},k)$ is a parameter-preserving polynomial-time reduction, proving Theorem~\ref{thm:reduction}.

\section{{Polynomial kernels}}
\label{sec:kernel}
We apply an OCT kernel to the resulting instance $(\hat{G},k)$ obtained in Section~\ref{sec:construction}
and let $(G^{\prime},k^{\prime})$ be its output.
Next, we give a polynomial-time reduction from the instance $(G^{\prime},k^{\prime})$ to an equivalent \textsc{Interval Completion} instance $(G^{\prime\prime},k^{\prime\prime})$.

\subsection{Reduction from OCT to interval completion}
\label{sec:direct-oct-ic}

Given an OCT instance $(G^{\prime},k^{\prime})$, let $n^{\prime}=|V(G^{\prime})|$ and $m^{\prime}=|E(G^{\prime})|$.
If $k^{\prime}<0$ or $k^{\prime}\geq n^{\prime}$, we can directly return a trivial No-instance or Yes-instance, respectively.
Now, we assume $0\leq k^{\prime} < n^{\prime}$.

First, we construct an auxiliary graph $H$ from two disjoint copies of $G^{\prime}$, retaining all edges within each copy and adding an edge between the two copies of every vertex.
Formally, for each $v\in V(G^{\prime})$, we introduce vertices $v^{0}$ and $v^{1}$ and add the edge $v^{0}v^{1}$.
For each $uv\in E(G^{\prime})$ and $i=0,1$, we also add the edge $u^{i}v^{i}$.
Thus the induced subgraph of $H$ on $\{v^{i}:v\in V(G^{\prime})\}$ is isomorphic to $G^{\prime}$ for each $i=0,1$.
Clearly, $H$ has $2n^{\prime}$ vertices and $n^{\prime}+2m^{\prime}$ edges, i.e.,
\begin{equation*}
  n_{H}=|V(H)|=2n^{\prime}; \qquad
  m_{H}=|E(H)|=n^{\prime}+2m^{\prime}.
\end{equation*}
Let $k_{H}=n^{\prime}-k^{\prime}$.
Then $H$ has an independent set of size at least $k_{H}$ if and only if $G^{\prime}$ has an OCT of size at most $k^{\prime}$.
Indeed, if we regard $i \in \{0, 1\}$ as the color of $v^{i}$, an independent set selects at most one copy of each vertex and its copy labels properly color the corresponding vertices of $G^{\prime}$; conversely, a two-coloring after deleting an OCT gives an independent set by selecting the copy matching each remaining vertex's color.

Now, we construct an instance $(G^{\prime\prime},k^{\prime\prime})$ of \textsc{Interval Completion} such that $H$ is a subgraph of $G^{\prime\prime}$.
The graph $G^{\prime\prime}$ has vertex set $P\mathbin{\dot\cup}Q$, constructed as follows.
\begin{enumerate}[label=\arabic*.,leftmargin=*]
  \item Introduce $r=m_{H}n_{H}$ \emph{redundant vertices} and denote their set by $R$, and let $P=V(H)\mathbin{\dot\cup}R$.
  \item Introduce a set $Q$ of $\binom{n_{H}}{2}-m_{H}+4m^{\prime}+rn_{H}$ new vertices with the following three types of neighborhoods in $P$.
  \begin{enumerate}[label=(\roman*),leftmargin=*]
    \item For each nonedge $uv \in \bar{E}(H)$, add a vertex $y^{uv}$ with neighborhood $\{u,v\}$.
    \item For each $v\in V(H)$, add vertices $y^{v}_{j}$ for $1\leq j\leq d_{H}(v)-1$, each with neighborhood $\{v\}$.
    \item For each $v\in V(H)$, add $r$ vertices $z^{v}_{1},z^{v}_{2},\ldots,z^{v}_{r}$, each with neighborhood $\{v\}\cup R$.
  \end{enumerate}
  In the labels of vertices in $Q$, superscripts identify the associated nonedge or vertex of $H$, and subscripts index vertices with the same neighborhood.
  \item Add edges to make $P$ and $Q$ cliques, obtaining $G^{\prime\prime}$.
\end{enumerate}
There are $4m^{\prime}$ vertices of type~(ii) in total.

Let $n^{\prime\prime}=|V(G^{\prime\prime})|$ and $m^{\prime\prime}=|E(G^{\prime\prime})|$.
Define
\begin{equation*}
    k^{\prime\prime}=C_{0}+m_{H}(n_{H}-1)-r\cdot k_{H},
\end{equation*}
where $C_{0}=\frac{1}{6}n_{H}(n_{H}-1)(2n_{H}-7)+rn_{H}(n_{H}-1)$.
The output is $(G^{\prime\prime},k^{\prime\prime})$.
Figure~\ref{fig:oct-ic-construction} illustrates the three graphs in this construction.

\begin{figure}[H]
  \centering
  \begin{tikzpicture}[
      font=\small,
      v/.style={circle,draw=black,fill=white,minimum size=4pt,inner sep=0pt},
      part/.style={draw=black!65,rounded corners=3pt,fill=black!2},
      group/.style={draw=black!45,rounded corners=2pt,fill=white},
      flow/.style={-{Stealth[length=2mm]},line width=.8pt},
      every label/.style={font=\scriptsize,text=black}
    ]
    \node at (1.8,3.15) {$G^{\prime}=K_{3}$};
    \node at (1.8,.15) {$H$};
    \node at (9.675,3.15) {$G^{\prime\prime}$};

    \node[v,label=left:$v_{1}$] (g1) at (.6,2.1) {};
    \node[v,label=above:$v_{2}$] (g2) at (1.8,2.1) {};
    \node[v,label=right:$v_{3}$] (g3) at (3,2.1) {};
    \draw (g1)--(g2)--(g3);
    \draw (g1) .. controls (.6,3) and (3,3) .. (g3);
    \draw[flow] (1.8,1.8)--(1.8,.5);

    \node[v,label={[yshift=2pt]left:$v_{1}^{0}$}] (h10) at (.6,-.8) {};
    \node[v,label={[yshift=2pt]above:$v_{2}^{0}$}] (h20) at (1.8,-.8) {};
    \node[v,label={[yshift=2pt]right:$v_{3}^{0}$}] (h30) at (3,-.8) {};
    \node[v,label={[yshift=2pt]left:$v_{1}^{1}$}] (h11) at (.6,-1.95) {};
    \node[v,label={[yshift=2pt]below:$v_{2}^{1}$}] (h21) at (1.8,-1.95) {};
    \node[v,label={[yshift=2pt]right:$v_{3}^{1}$}] (h31) at (3,-1.95) {};
    \draw (h10)--(h20)--(h30);
    \draw (h10) .. controls (.6,.1) and (3,.1) .. (h30);
    \draw (h11)--(h21)--(h31);
    \draw (h11) .. controls (.6,-2.85) and (3,-2.85) .. (h31);
    \draw (h10)--(h11) (h20)--(h21) (h30)--(h31);
    \draw[flow] (3.65,-1.375)--(4.2,-1.375);

    \draw[part] (4.85,.35) rectangle (14.5,2.775);
    \draw[part] (4.85,-2.625) rectangle (14.5,-.3);
    \node[anchor=west] at (5.05,2.53) {$P$ (clique)};
    \node[anchor=west] at (5.05,-2.4) {$Q$ (clique)};
    \draw[group] (5.1,.55) rectangle (9.55,2.3);
    \node[font=\scriptsize] at (7.325,2.12) {$V(H)$};
    \draw[group] (9.8,.55) rectangle (14.25,2.3);
    \node[font=\scriptsize] at (12.025,2.12) {$R$ ($54$ vertices)};
    \coordinate (rgroup) at (12.025,.55);
    \draw[group] (7.825,-2.25) rectangle (8.975,-.8);
    \draw[group] (9.625,-2.25) rectangle (10.775,-.8);
    \draw[group] (11.225,-2.25) rectangle (12.375,-.8);
    \draw[group] (13.075,-2.25) rectangle (14.225,-.8);
    \coordinate (qy) at (8.4,-.8);
    \coordinate (quy) at (10.2,-.8);
    \coordinate (qz) at (11.8,-.8);
    \coordinate (quz) at (13.65,-.8);

    \node[v,label=right:$v_{1}^{0}$] (p10) at (5.8,1.4) {};
    \node[v,label=right:$v_{2}^{0}$] (p20) at (7.3,1.4) {};
    \node[v,label=right:$v_{3}^{0}$] (px) at (8.8,1.4) {};
    \node[v,label=right:$v_{1}^{1}$] (p11) at (5.8,.95) {};
    \node[v,label=right:$v_{2}^{1}$] (py) at (7.3,.95) {};
    \node[v,label=right:$v_{3}^{1}$] (p31) at (8.8,.95) {};
    \node[v,label=right:$x_{1}$] (p1) at (10.4,1.425) {};
    \node[v,label=right:$x_{2}$] (p2) at (11.35,1.425) {};
    \node[font=\scriptsize] at (12.35,1.425) {$\cdots$};
    \node[v,label=right:$x_{54}$] (p54) at (13.4,1.425) {};

    \node[v,label=below:$y^{v_{1}^{0}v_{2}^{1}}$] (q1a) at (5.4,-1.2) {};
    \node[v,label=below:$y^{v_{3}^{0}v_{2}^{1}}$] (q1b) at (7,-1.2) {};
    \node[font=\scriptsize] at ($(q1a.center)!0.5!(q1b.center)$) {$\cdots$};
    \node[v,label=right:$y^{v_{1}^{0}}_{1}$] (q2a) at (8.025,-1.2) {};
    \node[v,label=right:$y^{v_{1}^{0}}_{2}$] (q2b) at (8.025,-1.9) {};
    \node[font=\scriptsize] at (9.3,-1.55) {$\cdots$};
    \node[v,label=right:$y^{v_{3}^{1}}_{1}$] (qu2a) at (9.825,-1.2) {};
    \node[v,label=right:$y^{v_{3}^{1}}_{2}$] (qu2b) at (9.825,-1.9) {};
    \node[v,label=right:$z^{v_{1}^{0}}_{1}$] (q3a) at (11.425,-1.15) {};
    \node[v,label=right:$z^{v_{1}^{0}}_{54}$] (q3c) at (11.425,-1.95) {};
    \node[font=\scriptsize,scale=.7,yshift=2pt] at ($(q3a.center)!0.5!(q3c.center)$) {$\vdots$};
    \node[font=\scriptsize] at (12.75,-1.55) {$\cdots$};
    \node[v,label=right:$z^{v_{3}^{1}}_{1}$] (qu3a) at (13.275,-1.15) {};
    \node[v,label=right:$z^{v_{3}^{1}}_{54}$] (qu3c) at (13.275,-1.95) {};
    \node[font=\scriptsize,scale=.7,yshift=2pt] at ($(qu3a.center)!0.5!(qu3c.center)$) {$\vdots$};

    \draw (p10)--(q1a) (py)--(q1a);
    \draw (px)--(q1b) (py)--(q1b);
    \draw (p10)--(qy);
    \draw (p31)--(quy);
    \draw (p10)--(qz);
    \draw (p31)--(quz);
    \draw (rgroup)--(qz) (rgroup)--(quz);
  \end{tikzpicture}
  \caption{The construction for $G^{\prime}=K_{3}$, with superscripts $0$ and $1$ distinguishing the two copies of each vertex of $G^{\prime}$ in $H$ and $P$.
  Here $R=\{x_{1},x_{2},\ldots,x_{54}\}$; $Q$ shows selected nonedge vertices and boxed groups associated with $v_{1}^{0}$ and $v_{3}^{1}$.
  Lines from a vertex to a box represent adjacency to every vertex inside; a line between boxes represents all edges between their vertex sets.
  Both $P$ and $Q$ are cliques, with all internal edges omitted.}
  \label{fig:oct-ic-construction}
\end{figure}

We now prove that the constructed \textsc{Interval Completion} instance is equivalent to the original OCT instance.
The main idea is to show that $H$ has an independent set of size at least $k_{H}$ if and only if $G^{\prime\prime}$ has an interval completion of size at most $k^{\prime\prime}$.
The introduction of redundant vertices amplifies the effect of the independent set size on the completion cost, making it dominate other variations caused by the vertex ordering.

\begin{lemma}
  \label{lem:direct-oct-ic}
  The OCT instance $(G^{\prime},k^{\prime})$ is a Yes-instance if and only if the constructed \textsc{Interval Completion} instance $(G^{\prime\prime},k^{\prime\prime})$ is a Yes-instance.
  Moreover, it satisfies $n^{\prime\prime}=O(N^{3})$, $m^{\prime\prime}=O(N^{6})$, and $k^{\prime\prime}=O(N^{4})$, where $N=n^{\prime}+m^{\prime}$.
\end{lemma}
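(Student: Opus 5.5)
The plan is to reduce the structure of interval completions of $G''$ to a linear ordering problem on $P$, compute the completion cost of each ordering explicitly, and compare it with the independent set bound through the equivalence stated just before the lemma: $H$ has an independent set of size at least $k_H$ if and only if $(G',k')$ is a Yes-instance.

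\textbf{Step 1 (chain-graph characterization).} Since $P$ and $Q$ are cliques, $G''$ is co-bipartite. I would first show that a co-bipartite graph $G''+F$ is an interval graph iff the bipartite graph of its $P$--$Q$ edges is a chain graph, i.e.\ the neighborhoods $N(q)\cap P$, $q\in Q$, are nested. For the forward direction, take an interval representation. By the Helly property, all intervals of $P$ share a point $a$ and all intervals of $Q$ share a point $b$; assume $a<b$. If $q\in Q$ meets $p\in P$, then $I_q$ contains a point of $I_p$ and also $b$, so $I_q\cap[a,b]\neq\varnothing$. Hence $q$ meets $p$ iff $R_p\ge L_q$ (taking $L_q\ge a$ after clipping), which gives nested neighborhoods. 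Conversely, nested neighborhoods give such a representation directly. Consequently every completion is described by an ordering $\pi$ of $P$ in which each $N(q)$ is a prefix. The optimal cost for a fixed $\pi$ is
\begin{equation*}
\mathrm{cost}(\pi)=\sum_{q\in Q}\bigl(\max_{p\in N_{G''}(q)}\pi(p)-|N_{G''}(q)\cap P|\bigr),
\end{equation*}
so $\ic(G'')=\min_\pi \mathrm{cost}(\pi)$.

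\textbf{Step 2 (cost formula).} Next I would evaluate the three types of $Q$-vertices separately.
\begin{enumerate}[label=(\roman*),leftmargin=*]
\item Each $y^{uv}$ contributes $\max(\pi(u),\pi(v))-2$.
\item The $d_H(v)-1$ copies of $y^{v}_{j}$ contribute $(d_H(v)-1)(\pi(v)-1)$.
\item Each $z^v_j$ contributes the position of the last element of $R\cup\{v\}$ minus $r+1$.
\end{enumerate}
I would first normalize $\pi$ by an exchange argument. Moving $R$ into a contiguous block, and moving the vertices of $V(H)$ placed after $R$ into a consistent order, does not increase the cost.

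For such a normalized $\pi$, let $S$ be the set of vertices of $H$ placed before $R$. Then each $v\notin S$ costs its offset after $R$ in type (iii). Each $v\in S$ costs $0$ in type (iii), but it pushes $R$ back by one position, which costs $r$ for each of the $r(n_H-|S|)$ relevant $z$'s.

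The key identity to verify is the following. Summing (i) and (ii) over all pairs, the sum of $\max(\pi(u),\pi(v))$ over all pairs of $V(H)$ is a constant. Removing the pairs that are edges and adding the degree terms leaves exactly a correction term counting, for each edge of $H$, how its endpoints are ordered relative to $S$. I expect this correction to be zero when $S$ is independent and at least $r$ otherwise. The resulting formula should be
\begin{equation*}
\mathrm{cost}(\pi)=C_0+m_H(n_H-1)-r|S|+E(\pi),
\end{equation*}
where $E(\pi)\ge 0$, $E(\pi)$ vanishes for a suitable order when $S$ is independent, and $E(\pi)\ge r$ whenever $S$ contains an edge of $H$ (or otherwise contributes at least $r$).

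\textbf{Step 3 (equivalence and sizes).} Given an independent set $S$ with $|S|\ge k_H$, place $S$ first, then $R$, then the rest; Step 2 gives cost at most $k''$. Conversely, take an optimal normalized $\pi$ with cost at most $k''$. All terms independent of $S$ are bounded by $m_Hn_H=r$, so the $r$-weighted terms dominate. From this I would conclude that $S$ is independent (otherwise a vertex can be moved out of $S$ without loss) and that $|S|\ge k_H$.

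For the size bounds, $r=m_Hn_H=O(N^2)$, so $n''=O(n_H^2+rn_H)=O(N^3)$, $m''\le\binom{n''}{2}=O(N^6)$, and $k''=O(n_H^3+rn_H^2)=O(N^4)$.

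The main obstacle is Step 2: getting the exact bookkeeping of the $\max$-terms and degree terms so that the constant is precisely $C_0+m_H(n_H-1)$, and proving the exchange/normalization argument that the error term $E(\pi)$ is strictly less than $r$ only when $S$ is independent. I would try first to verify the formula on small cases such as $H=K_2$, and then to prove it by induction on swapping adjacent elements of $\pi$.
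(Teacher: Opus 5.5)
Your Step 1, the per-type costs, and the size bounds agree with the paper. However, the identity in Step 2, which you leave as an expectation, is oriented the wrong way, and Step 3 inherits the error.

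\textbf{Orientation.} The independent set must be the set $I$ of vertices of $H$ placed \emph{after} the block $R$, not before it. Suppose $s$ vertices of $H$ precede $R$, and let $\pi$ be the relative order on $V(H)$. Then:
\begin{itemize}
\item each $z^{v}_{j}$ with $v$ before $R$ costs $s-1$, not $0$;
\item each $z^{v}_{j}$ with $v$ after $R$ costs $\pi(v)-1$;
\item moving $R$ one step left across a vertex $w$ changes the total by $-r+r\,|N_{H}(w)\cap I|$, where $I$ is the set currently after $R$.
\end{itemize}
Starting from $R$ at the end, this gives the exact identity
\[
\mathrm{cost}(\pi_{P})=C_{0}+\sum_{uv\in E(H)}\min\{\pi(u),\pi(v)\}-r\bigl(|I|-|E(H[I])|\bigr).
\]
Your arrangement (independent $S$ first, then $R$, then the rest) therefore earns only $r\bigl(|V(H)\setminus S|-|E(H[V(H)\setminus S])|\bigr)$, which can be far below $r|S|$.

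Two examples show the failure concretely:
\begin{itemize}
\item Take $G'=K_{3}$ with $k'=1$, which is a Yes-instance. Then $H$ is the prism, $k_H=2$ and $r=54$. Choosing $S=\{a^{0},b^{1}\}$ leaves four vertices spanning three edges. Your order then costs at least $C_{0}+9-54$, which exceeds $k''=C_{0}+45-108$, so your upper-bound construction fails.
\item Take $G'=K_{1}$, so $H=K_{2}$ and $r=2$. The order $R,a,b$ has $S=\varnothing$ and cost $2=C_{0}+m_{H}(n_{H}-1)-2$, so $E(\pi)=-2$. This also breaks your converse, which would deduce $|S|\ge k_H=1$ from an optimal order of cost $2\le k''$.
\end{itemize}

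\textbf{The error term.} Even after swapping the two sides, $E(\pi)\ge 0$ is false. The edge term $\sum_{uv\in E(H)}\min\{\pi(u),\pi(v)\}$ ranges over $[m_H,m_H(n_H-1)]$, so $\ic(G'')$ can be strictly smaller than $C_0+m_H(n_H-1)-r\alpha(H)$; already for $H=C_4$ the edge term can be $6<12$. What the argument actually needs is the two-sided estimate
\[
C_0+m_H-r\alpha(H)\le\ic(G'')\le C_0+m_H(n_H-1)-r\alpha(H),
\]
together with the gap $r-m_H(n_H-2)=2m_H>0$. This gap is what the choice $r=m_Hn_H$ is designed to produce.

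\textbf{Missing inequality.} The lower bound rests on $|I|-|E(H[I])|\le\alpha(H)$ for every $I\subseteq V(H)$. To see it, delete a vertex of $I$ that has a neighbor in $I$; this never decreases the left side, and repeating ends at an independent set. Your phrase ``a vertex can be moved out of $S$ without loss'' gestures at this, but it is not justified as stated: moving a vertex across $R$ also changes the $\min$-terms and the positions of other vertices. The paper avoids this by bounding the edge term below by $m_H$ and applying the inequality to $I$ directly, without changing the order.
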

\begin{proof}
  Since $P$ and $Q$ are cliques, it is easy to show by induction that the intervals of each clique have a common point in any interval representation of a completion.
  Reflecting the line if necessary, assume that the common point of $Q$ is no further right than that of $P$.
  For $x\in P$ and $y\in Q$, we then have $L_{y}\leq R_{x}$, so $xy$ is an edge exactly when $L_{x}\leq R_{y}$.
  We may therefore extend all right endpoints of $P$ to a common point beyond every original right endpoint and all left endpoints of $Q$ to another common point before every original left endpoint, without changing the represented graph.
  Break ties among the remaining endpoints by placing left endpoints before right endpoints, preserving intersections of closed intervals.
  Choose coordinates so that the common right endpoint of $P$ is $|P|+1$, the common left endpoint of $Q$ is $0$, and the left endpoints of $P$ occupy positions $1,2,\ldots,|P|$.
  Let $\pi_{P}(x)$ record the left-endpoint position of $x\in P$, and let $\pi_{Q}(y)$ record the right-endpoint position of $y\in Q$ on the same axis.
  Thus
  \begin{equation*}
    I_{x}=[\pi_{P}(x),|P|+1]\quad(x\in P),\qquad
    I_{y}=[0,\pi_{Q}(y)]\quad(y\in Q).
  \end{equation*}
  Figure~\ref{fig:pq-endpoints} illustrates these endpoint positions.
  \begin{figure}[H]
    \centering
    \begin{tikzpicture}[
        x=1cm,y=1cm,font=\small,
        guide/.style={densely dashed,black!45,line width=.4pt},
        sample/.style={black!45,line width=.7pt},
        pinterval/.style={black,line width=1.1pt},
        qinterval/.style={black,line width=1.1pt}
      ]
      \draw[guide] (0,0)--(0,1.85);
      \draw[guide] (11,0)--(11,-1.85);
      \draw[guide] (4,-1.2)--(4,.05);
      \draw[guide] (6.5,0)--(6.5,1.2);
      \draw[line width=.6pt] (-.15,0)--(11.2,0);
      \foreach \t in {0,1,2,3,4,5,6,7,8,9,10,11} {
        \draw (\t,-.06)--(\t,.06);
      }
      \node[below=3pt] at (0,0) {$0$};
      \node[above=3pt] at (4,0) {$\pi_{P}(x)$};
      \node[below=3pt] at (6.5,0) {$\pi_{Q}(y)$};
      \node[above=3pt] at (11,0) {$|P|+1$};
      \node[anchor=east] at (-.35,1.2) {$Q$};
      \node[anchor=east] at (-.35,-1.2) {$P$};

      \draw[sample] (0,1.7)--(2.5,1.7);
      \draw[qinterval] (0,1.2)--(6.5,1.2);
      \draw[sample] (0,.7)--(8.5,.7);
      \foreach \a/\b in {0/1.7,2.5/1.7,0/.7,8.5/.7} {
        \fill[black!45] (\a,\b) circle (1.2pt);
      }
      \fill[black] (0,1.2) circle (1.4pt);
      \fill[black] (6.5,1.2) circle (1.4pt);
      \node[above=2pt,text=black] at (5.3,1.2) {$I_{y}$};

      \draw[sample] (2,-.7)--(11,-.7);
      \draw[pinterval] (4,-1.2)--(11,-1.2);
      \draw[sample] (8,-1.7)--(11,-1.7);
      \foreach \a/\b in {2/-.7,11/-.7,8/-1.7,11/-1.7} {
        \fill[black!45] (\a,\b) circle (1.2pt);
      }
      \fill[black] (4,-1.2) circle (1.4pt);
      \fill[black] (11,-1.2) circle (1.4pt);
      \node[below=2pt,text=black] at (5.3,-1.2) {$I_{x}$};
    \end{tikzpicture}
    \caption{Representative intervals after endpoint normalization.
    The intervals of $Q$ share left endpoint $0$, and the intervals of $P$ share right endpoint $|P|+1$.
    Right endpoints of $Q$ are placed half a unit after integer positions.
    In the represented completion, $xy$ is an edge exactly when $\pi_{P}(x)\leq\pi_{Q}(y)$, as illustrated by the thicker intervals.}
    \label{fig:pq-endpoints}
  \end{figure}
  Each vertex of $Q$ consequently has a prefix of $\pi_{P}$ as its neighborhood in $P$.
  For $y\in Q$, write $N_{P}(y)$ for its original neighborhood in $P$.
  Once $\pi_{P}$ is fixed, choosing $\pi_{Q}(y)=\max_{x\in N_{P}(y)}\pi_{P}(x)+\tfrac12$ minimizes the number of added edges incident with $y$ while retaining all its original neighbors.
  This number is the position of its last original neighbor minus $|N_{P}(y)|$.
  Conversely, every permutation $\pi_{P}$ with these right-endpoint positions gives an interval completion.
  Hence
  \begin{equation*}
    \ic(G^{\prime\prime})
    =\min_{\pi_{P}}\sum_{y\in Q}
    \bigl(\max_{x\in N_{P}(y)}\pi_{P}(x)-|N_{P}(y)|\bigr).
  \end{equation*}

  Let $\alpha(H)$ denote the maximum size of an independent set of $H$.
  The correspondence between independent sets of $H$ and two-colorings of induced subgraphs of $G^{\prime}$ gives~\cite{FellowsEtAl2012}
  \begin{equation*}
    \alpha(H)=n^{\prime}-\oct(G^{\prime}).
  \end{equation*}
  It suffices to prove that $\ic(G^{\prime\prime})\leq k^{\prime\prime}$ if and only if $\alpha(H)\geq k_{H}$.

  For a fixed permutation $\pi_{P}$, the contribution of each vertex of $Q$ is given by its type:
  \begin{equation*}
    \left\{
    \begin{aligned}
      &C(y^{uv}) = \max\{\pi_{P}(u),\pi_{P}(v)\}-2;\\
      &C(y^{v}_{j}) = \pi_{P}(v)-1;\\
      &C(z^{v}_{j}) = \max\bigl\{\pi_{P}(v),\max_{x\in R}\pi_{P}(x)\bigr\}-r-1.
    \end{aligned}
    \right.
  \end{equation*}
  So we have $\min_{\pi_{P}} \sum_{y \in Q} C(y) = \ic(G^{\prime\prime})$.
  We next show that a minimum is attained by an order in which $R$ is consecutive.

  Let $\pi_{P}$ be an order of the vertices in $P$, and let $x$ be the last vertex of $R$ in $\pi_{P}$.
  Consider updating the order $\pi_{P}$ by moving all other vertices of $R$ immediately before $x$, preserving the relative order of $V(H)$.
  Every vertex of $H$ moves only to an earlier position or stays in place.
  The contributions from types~(i) and~(ii) cannot increase.
  The last vertex $x$ of $R$ and all vertices of $H$ originally after it keep their positions, so the contributions from type~(iii) are unchanged.
  This operation therefore cannot increase the number of added edges, so a minimum is attained by an order in which $R$ is consecutive.

  Now we know that $R$ is consecutive in the order $\pi_{P}$ satisfying $\ic(G^{\prime\prime})=\sum_{y\in Q}C(y)$.
  Let $\pi$ denote the relative order of $V(H)$ in $\pi_{P}$, with positions $1,2,\ldots,n_{H}$, and let $I$ be the set of vertices after $R$.
  When $R$ is at the end, we have $I=\varnothing$ and $\pi_{P}(v)=\pi(v)$ for every $v\in V(H)$.
  Each vertex of type~(iii) then contributes $(n_{H}+r)-(r+1)=n_{H}-1$, so their total contribution is $r\cdot n_{H}(n_{H}-1)$.
  For types~(i) and~(ii), write the sum over nonedges as the sum over all unordered vertex pairs minus the sum over edges.
  The degree terms cancel using
  \begin{equation*}
    \max\{\pi(u),\pi(v)\}=\pi(u)+\pi(v)-\min\{\pi(u),\pi(v)\}.
  \end{equation*}
  The combined contribution of these two types is
  \begin{equation*}
    \begin{aligned}
      &\sum_{uv\in\bar{E}(H)}\bigl(\max\{\pi(u),\pi(v)\}-2\bigr)
      +\sum_{v\in V(H)}(d_{H}(v)-1)(\pi(v)-1)\\
      ={}&\sum_{j=1}^{n_{H}}(j-1)(j-2)-\sum_{j=1}^{n_{H}}(j-1)
      +\sum_{uv\in E(H)}\min\{\pi(u),\pi(v)\}\\
      ={}&\frac{1}{6}n_{H}(n_{H}-1)(2n_{H}-7)
      +\sum_{uv\in E(H)}\min\{\pi(u),\pi(v)\}.
    \end{aligned}
  \end{equation*}
  Adding the contribution of type~(iii) and applying the definition of $C_{0}$ gives
  \begin{equation*}
    C_{0}+\sum_{uv\in E(H)}\min\{\pi(u),\pi(v)\}.
  \end{equation*}
  Moving the block $R$ left across a vertex decreases the total contribution by $r$, while each edge from that vertex to a vertex already after $R$ increases it by $r$.
  Therefore, for any order in which $R$ is consecutive, we have
  \begin{equation*}
    \sum_{y\in Q}C(y)=C_{0}+\sum_{uv\in E(H)}\min\{\pi(u),\pi(v)\}
    -r\cdot\bigl(|I|-|E(H[I])|\bigr).
  \end{equation*}
  Each term in the sum lies between $1$ and $n_{H}-1$.
  Moreover, we observe that $|I|-|E(H[I])|\leq\alpha(H)$, this is because removing a vertex with a neighbor in $I$ cannot decrease this quantity, and repeating this operation produces an independent set.
  Since a minimum is attained by an order in which $R$ is consecutive, we obtain
  \begin{equation*}
    \ic(G^{\prime\prime})\geq C_{0}+m_{H}-r\cdot\alpha(H).
  \end{equation*}

  For the upper bound, take a maximum independent set $I$ of $H$, put it after $R$, and put all other vertices of $H$ before $R$.
  Since $E(H[I])=\varnothing$, we obtain
  \begin{equation*}
    \ic(G^{\prime\prime})\leq C_{0}+m_{H}(n_{H}-1)-r\cdot\alpha(H).
  \end{equation*}

  If $\alpha(H)\geq k_{H}$, the upper bound gives $\ic(G^{\prime\prime})\leq k^{\prime\prime}$.
  If $\alpha(H)\leq k_{H}-1$, the lower bound gives
  \begin{equation*}
      \ic(G^{\prime\prime}) \geq C_{0}+m_{H}-r\cdot(k_{H}-1)
      =k^{\prime\prime}+r-m_{H}(n_{H}-2)
      =k^{\prime\prime}+2m_{H}>k^{\prime\prime},
  \end{equation*}
  where the last equality uses $r=m_{H}n_{H}$, and $m_{H}\geq n^{\prime}\geq1$.
  This proves the equivalence.

  Finally, counting the three types of vertices in $Q$ gives
  \begin{equation*}
      |Q|=\binom{n_{H}}{2}-m_{H}+4m^{\prime}+rn_{H}; \quad
      n^{\prime\prime}=n_{H}+r+|Q|.
  \end{equation*}
  Since $r=O(N^{2})$ and $n_{H}=O(N)$, we obtain $n^{\prime\prime}=O(N^{3})$ and $m^{\prime\prime}=O(N^{6})$.
  The parameter is an integer because the numerator in the first term of $C_{0}$ is divisible by six.
  It is positive since $n_{H}\geq2$, $k_{H}\leq n_{H}/2$, $m_{H}\geq1$, and $r\geq2$.
  Its defining formula gives $k^{\prime\prime}=O(N^{4})$.
  All vertices and edges can be constructed in deterministic polynomial time.
\end{proof}

\subsection{Transferring OCT kernels}
\label{sec:kernel-transfer}

\begin{lemma}
  \label{lem:oct-kernel-transfer}
  {An OCT kernel with at most $g(k)$ vertices and edges in total yields an \textsc{Interval Completion} kernel with $O(g(k)^{3})$ vertices, $O(g(k)^{6})$ edges, and output parameter bounded by $O(g(k)^{4})$.}
\end{lemma}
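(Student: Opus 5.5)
The plan is to compose three polynomial-time maps and check that equivalence and the size bounds survive each one. Start from an \textsc{Interval Completion} instance $(G,k)$ and apply Theorem~\ref{thm:reduction}. This gives an equivalent OCT instance $(\hat G,k)$ with the same parameter $k$, computed in polynomial time. Next, run the assumed OCT kernel on $(\hat G,k)$. Since its input parameter is exactly $k$, it returns an equivalent instance $(G',k')$ with $n'+m'\le g(k)$. Here we use that any kernel may be assumed to output a graph whose vertex and edge counts together are at most $g(k)$, as the hypothesis states.

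We may assume $k'\le n'$ without loss of generality. Otherwise the instance is trivially a Yes-instance, and we replace it by any constant-size Yes-instance of \textsc{Interval Completion}. Similarly, if $k'<0$ we output a constant-size No-instance. These cases are handled exactly as at the start of Section~\ref{sec:direct-oct-ic}.

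Finally, apply the construction of Section~\ref{sec:direct-oct-ic} to $(G',k')$ to obtain $(G'',k'')$. By Lemma~\ref{lem:direct-oct-ic} this instance is equivalent to $(G',k')$, and with $N=n'+m'\le g(k)$ it satisfies $n''=O(N^3)=O(g(k)^3)$, $m''=O(N^6)=O(g(k)^6)$ and $k''=O(N^4)=O(g(k)^4)$.

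Chaining the three equivalences shows that $(G'',k'')$ is a Yes-instance exactly when $(G,k)$ is. Each step runs in time polynomial in its input size. The last step's input has size at most $g(k)$, which is itself polynomial in the original input because the kernel runs in polynomial time. Hence the whole composition is a polynomial-time kernelization with the stated bounds.

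The only points needing care are the degenerate cases and a monotonicity issue. The trivial instances must be genuine \textsc{Interval Completion} instances of constant size; for example, a single vertex with $k=0$ as the Yes-instance, and $C_4$ with $k=0$ as the No-instance. The monotonicity issue is that the bounds in Lemma~\ref{lem:direct-oct-ic} are stated in $N$ while we need them in $g(k)$. This is immediate once $g$ is taken nondecreasing, or simply from $N\le g(k)$ pointwise. If the kernel is randomized, the same composition preserves one-sided error, because the outer two maps are deterministic and exact.
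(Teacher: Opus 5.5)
Your proposal is correct and follows essentially the same route as the paper: reduce via Theorem~\ref{thm:reduction}, apply the OCT kernel, dispose of trivial values of $k'$, and apply Lemma~\ref{lem:direct-oct-ic} using $N\le g(k)$ pointwise, inheriting the error direction from the kernel. The differences are cosmetic. The paper also handles $k=0$ up front by recognizing interval graphs directly, which sidesteps degenerate values such as $g(0)=0$ or an undefined $\log k$. Your assumption ``$k'\le n'$'' should read $k'<n'$ as in Section~\ref{sec:direct-oct-ic}, but this slip is harmless because the construction still yields a Yes-instance when $k_H=0$.
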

\begin{proof}

  If $k=0$, recognize interval graphs and return the fixed Yes-instance $(K_{1},0)$ or No-instance $(C_{4},0)$ accordingly.
  For $k\geq1$, combine Theorem~\ref{thm:reduction} with the OCT kernel and let $(G^{\prime},k^{\prime})$ be its output, where $N=n^{\prime}+m^{\prime}\leq g(k)$.
  If $k^{\prime}<0$, return $(C_{4},0)$; if $k^{\prime}\geq n^{\prime}$, return $(K_{1},0)$.

  Lemma~\ref{lem:direct-oct-ic} gives an equivalent interval completion instance $(G^{\prime\prime},k^{\prime\prime})$ with $n^{\prime\prime}=O(N^{3})=O(g(k)^{3})$, $m^{\prime\prime}=O(N^{6})=O(g(k)^{6})$, and $k^{\prime\prime}=O(N^{4})=O(g(k)^{4})$.
  After relabeling, the adjacency matrix and the encoded vertex count and parameter use $O(g(k)^{6})$ bits in total.
  The return reduction takes deterministic polynomial time and preserves equivalence, so the resulting kernel inherits both the error probability and the error direction of the OCT kernel.
\end{proof}

\begin{proof}[Proof of Theorems~\ref{thm:kernel} and~\ref{thm:kernel-1}]
  The deterministic OCT kernel of Koana and Kumabe~\cite{KoanaKumabe2026} has $g(k)=O(k^{12})$.

  For the randomized bound, apply the kernel of Kratsch and Wahlstr\"om~\cite{KratschWahlstrom2020} after adding $n$ isolated vertices to $\hat G$.
  This preserves the answer and the parameter $k$.
  The output has $O((k\log k\log\log k)^{3})$ vertices whose edges prescribe equal or different colors.
  Its failure probability is at most $2^{-\Omega(n)}$, with errors limited to false negatives.
  Replace each equality constraint by a path of length two, and keep each inequality constraint as an edge.
  A deleted internal vertex of such a path can be replaced by an endpoint without increasing the number of deletions.
  This gives an equivalent ordinary OCT instance with $O((k\log k\log\log k)^{6})=\widetilde O(k^{6})$ vertices and edges in total.
  This conversion and the return reduction preserve the error guarantee.

  Applying Lemma~\ref{lem:oct-kernel-transfer} with $g(k)=O(k^{12})$ proves Theorem~\ref{thm:kernel}.
  With $g(k)=\widetilde O(k^{6})$, it proves Theorem~\ref{thm:kernel-1}.
\end{proof}

\section{Conclusion}
\label{sec:conclusion}
{We obtain the first polynomial kernel for \textsc{Interval Completion} through a parameter-preserving reduction to OCT.
The construction represents each nonedge of $G$ by a variable vertex in the OCT instance, 
and uses edges and length-two paths between vertices to encode constraints from $G$.
Filling a nonedge in $G$ then corresponds to deleting a vertex in the OCT instance.}

Although we provide polynomial kernels for Interval Completion, whether polynomial kernels exist for the Interval Edge Deletion and Interval Edge Editing problems remains unknown. Interval Edge Editing allows both edge additions and deletions. These are all worthy of further study.


\begingroup
\section{AI statement}
\label{sec:ai-statement}
The initial version of the main proof was developed through discussions between the authors and the AI model GPT-6 Astra.
We simplified it and rewrote it based on our own understanding.
We also used this model to polish the text and create schematic figures.
The authors take full responsibility for the content and correctness of the paper.
\par\endgroup


\begingroup
\small
\bibliographystyle{unsrt}
\bibliography{references}

@article{BoothLueker1976,
  author  = {Kellogg S. Booth and George S. Lueker},
  title   = {Testing for the consecutive ones property, interval graphs, and graph planarity using {PQ}-tree algorithms},
  journal = {Journal of Computer and System Sciences},
  volume  = {13},
  number  = {3},
  pages   = {335--379},
  year    = {1976},
  doi     = {10.1016/S0022-0000(76)80045-1},
  url     = {https://doi.org/10.1016/S0022-0000(76)80045-1},
  note    = {\url{https://doi.org/10.1016/S0022-0000(76)80045-1}}
}

@article{VillangerEtAl2009,
  author  = {Yngve Villanger and Pinar Heggernes and Christophe Paul and Jan Arne Telle},
  title   = {Interval completion is fixed parameter tractable},
  journal = {SIAM Journal on Computing},
  volume  = {38},
  number  = {5},
  pages   = {2007--2020},
  year    = {2009},
  doi     = {10.1137/070710913},
  url     = {https://doi.org/10.1137/070710913},
  note    = {\url{https://doi.org/10.1137/070710913}}
}

@article{KaplanShamirTarjan1999,
  author  = {Haim Kaplan and Ron Shamir and Robert E. Tarjan},
  title   = {Tractability of parameterized completion problems on chordal, strongly chordal, and proper interval graphs},
  journal = {SIAM Journal on Computing},
  volume  = {28},
  number  = {5},
  pages   = {1906--1922},
  year    = {1999},
  doi     = {10.1137/S0097539796303044},
  url     = {https://doi.org/10.1137/S0097539796303044},
  note    = {\url{https://doi.org/10.1137/S0097539796303044}}
}

@article{FominGolovach2000,
  author  = {Fedor V. Fomin and Petr A. Golovach},
  title   = {Graph searching and interval completion},
  journal = {SIAM Journal on Discrete Mathematics},
  volume  = {13},
  number  = {4},
  pages   = {454--464},
  year    = {2000},
  doi     = {10.1137/S0895480199350477},
  url     = {https://doi.org/10.1137/S0895480199350477},
  note    = {\url{https://doi.org/10.1137/S0895480199350477}}
}

@article{BessyPerez2013,
  author  = {St{\'e}phane Bessy and Anthony Perez},
  title   = {Polynomial kernels for {Proper Interval Completion} and related problems},
  journal = {Information and Computation},
  volume  = {231},
  pages   = {89--108},
  year    = {2013},
  doi     = {10.1016/j.ic.2013.08.006},
  url     = {https://doi.org/10.1016/j.ic.2013.08.006},
  note    = {\url{https://doi.org/10.1016/j.ic.2013.08.006}}
}

@article{BliznetsEtAl2018,
  author  = {Ivan Bliznets and Fedor V. Fomin and Marcin Pilipczuk and Micha{\l} Pilipczuk},
  title   = {Subexponential parameterized algorithm for interval completion},
  journal = {ACM Transactions on Algorithms},
  volume  = {14},
  number  = {3},
  pages   = {35:1--35:62},
  year    = {2018},
  doi     = {10.1145/3186896},
  url     = {https://doi.org/10.1145/3186896},
  note    = {Article 35. \url{https://doi.org/10.1145/3186896}}
}

@article{CrespelleEtAl2023,
  author  = {Christophe Crespelle and P{\aa}l Gr{\o}n{\aa}s Drange and Fedor V. Fomin and Petr A. Golovach},
  title   = {A survey of parameterized algorithms and the complexity of edge modification},
  journal = {Computer Science Review},
  volume  = {48},
  pages   = {100556},
  year    = {2023},
  doi     = {10.1016/j.cosrev.2023.100556},
  url     = {https://doi.org/10.1016/j.cosrev.2023.100556},
  note    = {\url{https://doi.org/10.1016/j.cosrev.2023.100556}}
}

@inproceedings{Cao2016,
  author    = {Yixin Cao},
  title     = {Linear recognition of almost interval graphs},
  booktitle = {Proceedings of the Twenty-Seventh Annual {ACM}-{SIAM} Symposium on Discrete Algorithms},
  pages     = {1096--1115},
  year      = {2016},
  publisher = {Society for Industrial and Applied Mathematics},
  doi       = {10.1137/1.9781611974331.ch77},
  url       = {https://doi.org/10.1137/1.9781611974331.ch77},
  note      = {\url{https://doi.org/10.1137/1.9781611974331.ch77}}
}

@article{KratschWahlstrom2014,
  author  = {Stefan Kratsch and Magnus Wahlstr{\"o}m},
  title   = {Compression via matroids: A randomized polynomial kernel for {Odd Cycle Transversal}},
  journal = {ACM Transactions on Algorithms},
  volume  = {10},
  number  = {4},
  pages   = {20:1--20:15},
  year    = {2014},
  doi     = {10.1145/2635810},
  url     = {https://doi.org/10.1145/2635810},
  note    = {Article 20. \url{https://doi.org/10.1145/2635810}}
}

@misc{KoanaKumabe2026,
  author = {Tomohiro Koana and Soh Kumabe},
  title  = {A deterministic polynomial kernel for {Odd Cycle Transversal}},
  month  = {24 September},
  year   = {2026},
  url    = {https://arxiv.org/abs/2609.29141v1},
  note   = {Preprint, arXiv:2609.29141v1. \url{https://arxiv.org/abs/2609.29141v1}}
}

@inproceedings{RaviAgrawalKlein1991,
  author    = {R. Ravi and Ajit Agrawal and Philip Klein},
  title     = {Ordering problems approximated: Single-processor scheduling and interval graph completion},
  booktitle = {Automata, Languages and Programming},
  series    = {Lecture Notes in Computer Science},
  volume    = {510},
  pages     = {751--762},
  year      = {1991},
  doi       = {10.1007/3-540-54233-7_180},
  url       = {https://doi.org/10.1007/3-540-54233-7_180},
  note      = {\url{https://doi.org/10.1007/3-540-54233-7_180}}
}

@inproceedings{KashiwabaraFujisawa1979,
  author    = {Toshinobu Kashiwabara and Toshio Fujisawa},
  title     = {An {NP}-complete problem on interval graphs},
  booktitle = {Proceedings of the 1979 IEEE International Symposium on Circuits and Systems},
  pages     = {82--83},
  year      = {1979}
}

@article{FellowsEtAl2012,
  author  = {Michael R. Fellows and Fedor V. Fomin and Daniel Lokshtanov and Frances Rosamond and Saket Saurabh and Yngve Villanger},
  title   = {Local search: Is brute-force avoidable?},
  journal = {Journal of Computer and System Sciences},
  volume  = {78},
  number  = {3},
  pages   = {707--719},
  year    = {2012},
  doi     = {10.1016/j.jcss.2011.10.003},
  url     = {https://doi.org/10.1016/j.jcss.2011.10.003},
  note    = {\url{https://doi.org/10.1016/j.jcss.2011.10.003}}
}

@article{FulkersonGross1965,
  author  = {D. R. Fulkerson and O. A. Gross},
  title   = {Incidence matrices and interval graphs},
  journal = {Pacific Journal of Mathematics},
  volume  = {15},
  number  = {3},
  pages   = {835--855},
  year    = {1965},
  doi     = {10.2140/pjm.1965.15.835},
  url     = {https://doi.org/10.2140/pjm.1965.15.835},
  note    = {\url{https://doi.org/10.2140/pjm.1965.15.835}}
}

@article{NatanzonShamirSharan2001,
  author  = {Assaf Natanzon and Ron Shamir and Roded Sharan},
  title   = {Complexity classification of some edge modification problems},
  journal = {Discrete Applied Mathematics},
  volume  = {113},
  number  = {1},
  pages   = {109--128},
  year    = {2001},
  doi     = {10.1016/S0166-218X(00)00391-7},
  url     = {https://doi.org/10.1016/S0166-218X(00)00391-7},
  note    = {\url{https://doi.org/10.1016/S0166-218X(00)00391-7}}
}

@article{KratschWahlstrom2020,
  author  = {Stefan Kratsch and Magnus Wahlstr{\"o}m},
  title   = {Representative sets and irrelevant vertices: New tools for kernelization},
  journal = {Journal of the ACM},
  volume  = {67},
  number  = {3},
  pages   = {16:1--16:50},
  year    = {2020},
  doi     = {10.1145/3390887},
  url     = {https://doi.org/10.1145/3390887},
  note    = {Article 16. \url{https://doi.org/10.1145/3390887}}
}

@misc{Jansen2023KernelsAndBeyond,
  author       = {Bart M. P. Jansen},
  title        = {Kernels and Beyond},
  howpublished = {Interactive session at {FPT Fest in Honour of Mike Fellows}},
  year         = {2023},
  month        = jun,
  url          = {https://bjansen.win.tue.nl/talks.html},
  note         = {Bergen, Norway, 13 June 2023. \url{https://bjansen.win.tue.nl/talks.html}}
}

@article{Jansen2026QuoVadis,
  author  = {Bart M. P. Jansen},
  title   = {Quo Vadis, Kernelization?},
  journal = {Computer Science Review},
  volume  = {62},
  pages   = {101035},
  year    = {2026},
  doi     = {10.1016/j.cosrev.2026.101035},
  url     = {https://doi.org/10.1016/j.cosrev.2026.101035},
  note    = {\url{https://doi.org/10.1016/j.cosrev.2026.101035}}
}

@article{NatanzonShamirSharan2000,
  author  = {Assaf Natanzon and Ron Shamir and Roded Sharan},
  title   = {A polynomial approximation algorithm for the minimum fill-in problem},
  journal = {SIAM Journal on Computing},
  volume  = {30},
  number  = {4},
  pages   = {1067--1079},
  year    = {2000},
  doi     = {10.1137/S0097539798336073},
  url     = {https://doi.org/10.1137/S0097539798336073},
  note    = {\url{https://doi.org/10.1137/S0097539798336073}}
}
\endgroup

\end{document}